\newtheorem{theorem}{Theorem}[section]
\newtheorem{lemma}[theorem]{Lemma}
\theoremstyle{definition}
\newtheorem{definition}[theorem]{Definition}
\newtheorem{example}[theorem]{Example}
\newtheorem{openproblem}[theorem]{Open Problem}
\theoremstyle{remark}
\numberwithin{equation}{section}
\begin{document}

\title{On the Classification of Weierstrass Elliptic Curves over $\mathbb{Z}_n$}


\author{Param Parekh}
\address{}
\curraddr{Stony Brook University, 100 Nicolls Road, Stony Brook, 11794, New York, USA}
\email{param.parekh@stonybrook.edu}
\thanks{}

\author{Paavan Parekh}
\address{}
\curraddr{Stony Brook University, 100 Nicolls Road, Stony Brook, 11794, New York, USA }
\email{paavan.parekh@stonybrook.edu}
\thanks{}

\author{Sourav Deb}
\address{}
\curraddr{Unitedworld Institute of Technology, Karnavati University, Gandhinagar 382422, Gujarat, India} \email{sourav@karnavatiuniversity.edu.in}
\thanks{A premilinary version of the paper is available at https://arxiv.org/abs/2310.11768}

\author{Manish K Gupta}
\address{}
\curraddr{Kaushalya: The Skill University, Ahmedabad 382424
Gujarat, India}
\email{mankg@guptalab.org}
\thanks{}

\subjclass[2010]{11E08, 11Y16, 14H52, 14G05}

\date{}

\dedicatory{}

\begin{abstract}
 The elliptic curves are beautiful mathematical object with several applications. They have been studied well over finite and infinite fields.  In this work, we study  Weierstrass elliptic curves over the finite ring $\mathbb{Z}_n$ through classification. Our study is supported with extensive computational data. We also present some conjectures.
 \end{abstract}

\maketitle
\section{Introduction}
The elliptic curves over finite/infinite fields have been studied extensively by many researchers \cite{silverman2009arithmetic, washington2008elliptic}. Elliptic curves over finite fields were first introduced by Neal Koblitz \cite{koblitz1987elliptic} and Victor S. Miller \cite{miller1985use} independently in $1985$. Classification of elliptic curves over a given field or ring $\mathbb{K}$ is useful in identifying isomorphism classes. It allows selecting a representative that may lead to a more efficient implementation of the group law. The classification of elliptic curves over finite fields for Weierstrass curves \cite{menezes1993elliptic, jeong2009isomorphism, schoof1987nonsingular}, and various alternate models of elliptic curves \cite{farashahi2011number, rezaeian2010number, feng2010elliptic, wu2011isomorphism} has been studied well. Further, the number of isomorphism classes of hyperelliptic curves over finite fields has also been of interest \cite{choie2004isomorphism, choie2002isomorphism, deng2006isomorphism, encinas2002isomorphism}. Elliptic curves over several finite rings have been studied in \cite{SalaTaufer+2024, A.Chillali, Hassib, tadmori2015elliptic, Tadmori2015CryptographyOT, Boulbot2016EllipticCO}. 

In this paper, we present a detailed algebraic classification of the Weierstrass elliptic curves defined over the finite ring $\mathbb{Z}_n$.
The comprehensive dataset obtained by the computational approach is also given. This work primarily addresses the general setting over the finite ring $\mathbb{Z}_n$. In certain cases, however, the inherent algebraic difficulties in finding roots of polynomials in $\mathbb{Z}_n$ necessitate framing the results under the closest applicable general assumptions, accompanied by appropriate explanations. Additionally, all point enumerations on the elliptic curves are performed excluding the point at infinity.

This work is organized as follows. Section \ref{pre} introduces basic definitions and notations of elliptic curves. Section \ref{clafq} summarizes key results from the literature on classifying reduced and generalized Weierstrass elliptic curves over finite fields.  Section \ref{distKth} discusses the distribution of $k^{th}$ power residues over $\mathbb{Z}_n$, which is helpful in deriving upper bound for the number of reduced Weierstrass equations over $\mathbb{Z}_{p^{m}}$. Section \ref{cla} presents the main contributions, extending classification results for generalized and reduced Weierstrass elliptic curves over $\mathbb{Z}_n$. Section \ref{com} provides computational data in tabular form to support the results of Section \ref{cla}, along with a dedicated HTML page containing the complete classification database \cite{ECCguptalab}. Finally, Section \ref{con} concludes the paper.

                                
\section{Preliminaries}
\label{pre}
In this section, we revisit the basics of the elliptic curves. The Weierstrass equation of an elliptic curve $E$ is described by the following polynomial. 
\begin{equation}
    \begin{split}
       \quad E: y^2+a_1 x y+a_3 y=x^3+a_2 x^2+a_4 x+a_6.
    \end{split}
 \label{nonhom_eq}   
\end{equation}
Here,  the point $(0,1,0)$ in $E$ is called the point at infinity, denoted by $\mathcal{O}$ and $a_1, a_2, a_3, a_4, a_6 \in \mathbb{K}$, (often a field)  then $E$ is said to be defined over $\mathbb{K}$.

The characterizing factors of $E$ are the following values along with the discriminant $\Delta$ and j-invariant $j(E)$ \cite{menezes1993elliptic}.
\begin{equation}
          \begin{aligned}
            b_2 & =a_1^2+4 a_2 \\
            b_4 & =2 a_4+a_1 a_3 \\
            b_6 & =a_3^2+4 a_6 \\
            b_8 & =a_1^2 a_6+4 a_2 a_6-a_1 a_3 a_4+a_2 a_3^2-a_4^2 \\
            c_4 & =b_2^2-24 b_4 \\
            c_6 & =-b_2^3 + 36 b_2 b_4 - 216 b_6\\
            \Delta & =-b_2^2 b_8-8 b_4^3-27 b_6^2+9 b_2 b_4 b_6\\
            j&(E) = c_4^3/\Delta \\  
            1728\,\Delta &= c_4^3 - c_6^2.
         \end{aligned}
\end{equation}
    
 Equation \ref{nonhom_eq} represents the non-singular generalized Weierstrass elliptic curve $E / \mathbb{K}$, $i.e.$ $\Delta \neq 0$.
 It is well known that the set of points of the elliptic curves forms an abelian group under the chord-and-tangent rule, where the point at infinity $\mathcal{O}$ is the identity element (Theorem 2.3, \cite{menezes1993elliptic}). 
 The natural isomorphism between two elliptic curves $E_1 / \mathbb{K}$ and $E_2 / \mathbb{K}$ can be determined if and only if $j(E_1)=j(E_2)$.
 Equivalently, two elliptic curves given by the generalized Weierstrass equations over $\mathbb{K}$  are isomorphic if one curve can be obtained from the other using the coordinate transformation $\tau : (x,y) \rightarrow (u^2x + r, u^3y + u^2sx + t),\; u \in \mathbb{K}^*$, $r,s,t \in \mathbb{K}$.

While the generalized Weierstrass equation can be used over any field with random characteristic, the reduced Weierstrass equation is crucial in the case of $char(\mathbb{K})\neq 2,3$, by selecting the transformation mapping as $(x,y) \xrightarrow{} (x, y - \frac{a_1}{2}x - \frac{a_3}{2})$ and $(x,y) \rightarrow (\frac{x-3b_2}{36}, \frac{y}{216})$  \cite{menezes1993elliptic} 
over $\mathbb{K}$ as,

\begin{equation}
        E  : y^2 = x^3 + ax + b,\ char(\mathbb{K}) \neq 2,3.
    \end{equation}

The j-invariant and discriminant  for the reduced Weierstrass equation can be obtained as:

 \begin{equation}
    \begin{aligned}
            \Delta & = -16(4a^3+27b^2)\\
            j(E) &= -1728\frac{4a^3}{\Delta}.\\
      \end{aligned}
\end{equation}

Isomorphism between two reduced Weierstrass elliptic curves can be determined if one curve can be obtained from the other using the coordinate transformation $\tau : (x,y) \rightarrow (u^2x, u^3y),\; u \in \mathbb{K}^*$. 

Considering an elliptic curve $E$ defined over $\mathbb{K}$, we denote the automorphism group of $E$ by $Aut(E)$, which consists of all the isomorphisms to itself and subsequently, $|Aut(E)|$ denotes the cardinality of the automorphism group of $E$. 
One can note that the order of $Aut(E)$ takes different possible values based on the coefficients of the curve $E$ defined over $\mathbb{K}$. 
The automorphism group $Aut(E)$ holds a crucial role in characterizing the elliptic curves in classes, where each class contains a primary curve or the class leader that further corresponds to the remaining class members through isomorphisms.

Throughout this paper, we adopt the following notation. Let \(A\) represent either a finite field \(\mathbb{F}_q\) or a finite ring \(R\). The number of non-singular generalized and reduced Weierstrass elliptic curves over \(A\) are denoted by \(N_g(A)\) and \(N_r(A)\), respectively, where subscript \(g\) stands for generalized and subscript \(r\) stands for reduced. The number of unique non-singular generalized and reduced Weierstrass elliptic curves isomorphic to the given curve \(E_k/A\) are denoted by \(N_g^{(k)}(A)\) and \(N_r^{(k)}(A)\), where \(k\) labels the isomorphism class represented by \(E_k\). Correspondingly, \(C_g(A)\) and \(C_r(A)\) denote the number of isomorphism classes for these curves.

\section{Classification of Weierstrass Elliptic Curves over $\mathbb{F}_q$}
\label{clafq}
In this section, we collect the fundamental results on the elliptic curves over finite fields that are relevant to our work.
The following results are well known. 

\subsection{Reduced Weierstrass elliptic curves over $\mathbb{F}_q$} 
\begin{theorem} \cite{menezes1993elliptic}
\label{rwt1}
The number of non-singular reduced Weierstrass elliptic curves over a finite field $\mathbb{F}_q$ is $N_r(\mathbb{F}_q) = q^2 - q$, where $char(\mathbb{F}_q) \neq 2,3$.
\end{theorem}   
\begin{theorem} \cite{menezes1993elliptic}
 The number of unique non-singular reduced Weierstrass elliptic curves isomorphic to the given curve $E_k/\mathbb{F}_q$ is $N_r^{(k)}(\mathbb{F}_q) = \frac{q-1}{|Aut(E)|}$, where $char(\mathbb{F}_q) \neq 2,3$.
\label{rwt2}
\end{theorem}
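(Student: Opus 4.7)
The plan is to prove the statement by recognizing it as an instance of the orbit--stabilizer theorem applied to a natural group action of $\mathbb{F}_q^*$ on the set of reduced Weierstrass curves.

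First, I would restrict the general admissible change of variables $\tau:(x,y)\mapsto(u^2x+r,\;u^3y+u^2sx+t)$ to those transformations that send a reduced curve $y^2=x^3+ax+b$ to another reduced curve $y^2=x^3+a'x+b'$. Since $\operatorname{char}(\mathbb{F}_q)\neq 2,3$, a direct substitution and matching of coefficients of $x^2$, $x$, $y$, and the constant term forces $r=s=t=0$, leaving only $(x,y)\mapsto(u^2x,u^3y)$ with $u\in\mathbb{F}_q^*$. Plugging this in shows that $E:(a,b)$ is carried to $E':(a',b')$ precisely when $a'=u^4a$ and $b'=u^6b$. Thus the set of reduced Weierstrass curves isomorphic to $E$ is exactly the orbit of $(a,b)$ under the action of $\mathbb{F}_q^*$ defined by $u\cdot(a,b)=(u^4a,u^6b)$.

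Next, I would identify the stabilizer of this action with $\operatorname{Aut}(E)$. An element $u\in\mathbb{F}_q^*$ lies in the stabilizer iff $u^4a=a$ and $u^6b=b$, which is the same condition for $(x,y)\mapsto(u^2x,u^3y)$ to be an automorphism of $E$. Since the only admissible transformations preserving the reduced form are of this type, the stabilizer is canonically in bijection with $\operatorname{Aut}(E)$, so $|\operatorname{Stab}((a,b))|=|\operatorname{Aut}(E)|$.

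Finally, applying the orbit--stabilizer theorem to the action of the finite group $\mathbb{F}_q^*$ gives
\begin{equation*}
\#\{\text{reduced curves isomorphic to }E\}=\frac{|\mathbb{F}_q^*|}{|\operatorname{Stab}((a,b))|}=\frac{q-1}{|\operatorname{Aut}(E)|},
\end{equation*}
which is the desired count. The main obstacle is the first step: carefully justifying that in characteristic $\neq 2,3$ every isomorphism between two curves already in reduced form must take the simplified shape $(x,y)\mapsto(u^2x,u^3y)$, so that the ambient transformation group acting on reduced curves really is $\mathbb{F}_q^*$ and not something larger. The remaining steps are a clean bookkeeping argument via orbit--stabilizer.
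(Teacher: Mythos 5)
Your proof is correct and is essentially the same argument as the paper's: the paper counts the $q-1$ transformations $(x,y)\mapsto(u^2x,u^3y)$ and shows by an explicit coset computation that exactly $|Aut(E)|$ of them produce each image curve, which is precisely your orbit--stabilizer argument written out by hand. Your preliminary verification that an admissible change of variables between two reduced curves forces $r=s=t=0$ (using $char(\mathbb{F}_q)\neq 2,3$) is a point the paper silently assumes, so including it is a minor gain in rigor rather than a different route.
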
                 

\begin{theorem} \cite{menezes1993elliptic} For $char(\mathbb{F}_q) \neq 2,3$, the number of isomorphism classes of reduced Weierstrass elliptic curves over $\mathbb{F}_q$,  is 
    \[
C_r(\mathbb{F}_q) = \left \{ \,
\begin{array}{lll}
2q+6 & when & q \equiv 1 \pmod {12}\\
2q+2 & when & q \equiv 5 \pmod {12} \\
2q+4 & when & q \equiv 7 \pmod {12} \\
2q & when & q \equiv 11 \pmod {12}.\\
\end{array}
\right.\\
\]
\label{irw}
\end{theorem}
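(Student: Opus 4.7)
The approach is to partition the isomorphism classes by their $j$-invariant. Since $j(E)$ is invariant under isomorphism, takes values in $\mathbb{F}_q$, and under $char(\mathbb{F}_q)\neq 2,3$ the two distinguished values $j=0$ and $j=1728$ are distinct, I can count classes separately over the $q-2$ ``generic'' values and the two special ones. The admissible transformations between reduced Weierstrass curves are $(x,y)\mapsto(u^2x,u^3y)$, which induce $(a,b)\mapsto(u^{-4}a,u^{-6}b)$; the counting question reduces to counting orbits of $\mathbb{F}_q^*$ acting on coefficient pairs sharing a given $j$.

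For generic $j\notin\{0,1728\}$ the representatives satisfy $a,b\neq 0$ and the stabilizer of the action is $\{\pm 1\}$, giving $|Aut(E)|=2$; thus each orbit has size $(q-1)/2$ and there are exactly two orbits (equivalently, two quadratic twists) per such $j$-value, contributing $2(q-2)$ classes. For $j=1728$ the representatives are $y^2=x^3+ax$ with $a\neq 0$, and the orbits of $a\mapsto u^{-4}a$ are indexed by $\mathbb{F}_q^*/(\mathbb{F}_q^*)^4$, of order $\gcd(4,q-1)$. For $j=0$ the representatives are $y^2=x^3+b$ with $b\neq 0$, and the orbits are indexed by $\mathbb{F}_q^*/(\mathbb{F}_q^*)^6$, of order $\gcd(6,q-1)$. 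Combining these yields
\[
\sigma \;=\; 2(q-2)+\gcd(4,q-1)+\gcd(6,q-1).
\]

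A case analysis on $q\bmod 12$ then delivers the four values in the statement: for $q\equiv 1$, $12\mid q-1$ so both gcds attain their maxima $4$ and $6$, giving $2q+6$; for $q\equiv 5$, $4\mid q-1$ but $3\nmid q-1$, yielding an extra $4+2=6$ and total $2q+2$; for $q\equiv 7$, $3\mid q-1$ but $4\nmid q-1$, yielding an extra $2+6=8$ and total $2q+4$; for $q\equiv 11$ neither $3$ nor $4$ divides $q-1$, so both gcds are $2$ and the total is $2q$. As a sanity check against Theorems \ref{rwt1} and \ref{rwt2}, one should have $\tfrac{q-1}{2}\cdot 2(q-2)+\tfrac{q-1}{4}\cdot\gcd(4,q-1)+\tfrac{q-1}{6}\cdot\gcd(6,q-1)=q^2-q$, which is a short verification.

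The main obstacle is the orbit count at the two exceptional $j$-values: one must identify the stabilizer of each coefficient pair as a group of $k$-th roots of unity in $\mathbb{F}_q$ (with $k=4$ or $6$) and invoke cyclicity of $\mathbb{F}_q^*$ to conclude $|\mathbb{F}_q^*/(\mathbb{F}_q^*)^k|=\gcd(k,q-1)$. Ancillary but necessary steps -- showing every $j\in\mathbb{F}_q$ is realized by some non-singular reduced curve, and that $|Aut(E)|$ takes values $2$, $4$, $6$ exactly according to the $j$-regime -- are standard once $char(\mathbb{F}_q)\neq 2,3$ is in force.
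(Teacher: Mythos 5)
Your argument is correct, and it takes a genuinely different route from the paper. The paper works from the mass formula $\sum_{E_k}\frac{q-1}{|Aut(E_k)|}=q^2-q$ (its Equation \ref{eq1}): it first determines the possible values of $|Aut(E)|$, counts how many of the $q^2-q$ non-singular coefficient pairs fall into each automorphism regime ($q-1$ with $a=0$, $q-1$ with $b=0$, $(q-1)(q-2)$ generic), and divides by the orbit sizes to extract the class counts case by case over $q\bmod 12$. You instead fibre the count over the $j$-invariant and count orbits of the $\mathbb{F}_q^*$-action directly: two quadratic twists for each of the $q-2$ generic $j$-values, and $|\mathbb{F}_q^*/(\mathbb{F}_q^*)^4|=\gcd(4,q-1)$ resp.\ $|\mathbb{F}_q^*/(\mathbb{F}_q^*)^6|=\gcd(6,q-1)$ classes at $j=1728$ and $j=0$. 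This buys you the single closed form $\sigma=2(q-2)+\gcd(4,q-1)+\gcd(6,q-1)$, from which the four congruence cases drop out mechanically, and it avoids the paper's somewhat delicate discussion of when $|Aut(E)|$ is $2$ versus $4$ or $6$; your mass-formula identity serves as an independent consistency check rather than as the engine of the proof. The one step you should make fully explicit is that the fibre over each generic $j$ contains exactly $q-1$ coefficient pairs (equivalently, that all pairs $(a,b)$ with $a,b\neq 0$ and fixed $a^3/b^2$ form a single set $\{(t^2a_0,t^3b_0):t\in\mathbb{F}_q^*\}$ on which the transformation group acts through $t\mapsto tu^{-2}$); this is what justifies ``exactly two orbits per generic $j$'' and is the analogue of the paper's appeal to its Theorem 2.6 citation.
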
    


\subsection{Generalized Weierstrass Elliptic Curves over $\mathbb{F}_q$}

\begin{theorem}
    The number of non-singular generalized Weierstrass elliptic curves over $\mathbb{F}_q$ is $N_g(\mathbb{F}_q)=q^5-q^4$.
    \label{gwt1}
\end{theorem}

\begin{proof}
First, we will find the total number of singular generalized Weierstrass elliptic curves over $\mathbb{F}_q$. Let,
\[
f(x,y)
=
y^2 + a_1 x y + a_3 y
- x^3 - a_2 x^2 - a_4 x - a_6.
\]

The curve is singular if and only if there exists
$(x_0,y_0) \in \mathbb{F}_q^2$ such that
\[
f(x_0,y_0)=0, \qquad
\frac{\partial f}{\partial x}(x_0,y_0)=0, \qquad
\frac{\partial f}{\partial y}(x_0,y_0)=0.
\]
\[
\frac{\partial f}{\partial x}
=
a_1 y - 3x^2 - 2a_2 x - a_4,
\qquad
\frac{\partial f}{\partial y}
=
2y + a_1 x + a_3.
\]

If curve is singular ($\Delta = 0$), the curve has a unique singular point,
which is either a node or a cusp  \cite{silverman2009arithmetic}.
The singularity conditions at $(0,0)$ leads to,
\[
a_3 = 0, \qquad a_4 = 0, \qquad a_6 = 0.
\]
Thus every such singular curve is of the form
\begin{equation}\label{eq:normalized}
y^2 + a_1 x y = x^3 + a_2 x^2,
\end{equation}
with $a_1,a_2 \in \mathbb{F}_q$.
There are exactly $q^2$ such normalized singular equations.

We now show that the set of singular generalized Weierstrass equations over $\mathbb{F}_q$
is in bijection with $\mathbb{F}_q^4$. We can prove bijection with following:

Start with a normalized singular equation \eqref{eq:normalized},
determined by $(a_1,a_2) \in \mathbb{F}_q^2$.
Choose an arbitrary translation $(x_0,y_0) \in \mathbb{F}_q^2$.
Applying the translation $x \rightarrow x + x_0, y \rightarrow y + y_0$ yields a generalized Weierstrass equation
$$y^2+a_1' x y+a_3' y=x^3+a_2' x^2+a_4' x+a_6'$$
with coefficients
\[
\begin{aligned}
a_1' &= a_1, \\
a_3' &= 2y_0 + a_1 x_0, \\
a_2' &= a_2 + 3x_0, \\
a_4' &= 3x_0^2 + 2a_2 x_0 - a_1 y_0, \\
a_6' &= x_0^3 + a_2 x_0^2 - y_0^2 - a_1 x_0 y_0.
\end{aligned}
\]

Surjectivity follows since every singular curve has a unique
singular point and hence admits a unique normalization of the form
\eqref{eq:normalized}.
Injectivity follows because two distinct quadruples
$(a_1,a_2,x_0,y_0)$ produce distinct equations.
Thus the mapping is bijective.
Since there are $q^4$ choices of $(a_1,a_2,x_0,y_0)$, there are exactly
$q^4$ singular curves. The bijection relies solely on the uniqueness of the singular point, which hold in all characteristics, hence proved.
\end{proof}

\begin{theorem}
      The number of unique generalized Weierstrass elliptic curves isomorphic to given curve $E_k/\mathbb{F}_q$ is $N_g^{(k)}(\mathbb{F}_q) = \frac{q^4-q^3}{|Aut(E)|}$.   
\label{nec}
\end{theorem}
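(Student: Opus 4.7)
The plan is to mirror the proof strategy of \textit{Theorem \ref{rwt2}} exactly, but with the enlarged coordinate transformation group appropriate to the generalized Weierstrass form. First, I would count the total number of admissible transformations $\tau : (x,y) \rightarrow (u^2x + r,\ u^3y + u^2 s x + t)$ with $u \in \mathbb{F}_q^*$ and $r,s,t \in \mathbb{F}_q$: there are $q-1$ choices for $u$ and $q$ choices each for $r$, $s$, $t$, giving $(q-1)q^3 = q^4 - q^3$ transformations in total. Each such $\tau$ sends $E$ to some generalized Weierstrass curve $\tau(E)$ isomorphic to $E$, and every curve in the isomorphism class of $E$ arises this way by definition.

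Next, I would observe that the set $G$ of all such $\tau$ forms a group under composition (this is a routine check, since $\tau$ is invertible and the composition of two transformations of this shape is again of this shape). Then $G$ acts on the set of generalized Weierstrass curves over $\mathbb{F}_q$, the orbit of $E$ is precisely its isomorphism class, and the stabilizer of $E$ is exactly $Aut(E)$.

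From here the counting is identical to the one given for \textit{Theorem \ref{rwt2}}. Let $Aut(E) = \{\tau_1,\tau_2,\ldots,\tau_t\}$ with $t = |Aut(E)|$, and let $E'$ be any curve in the isomorphism class of $E$ with some fixed $\tau' \in G$ satisfying $\tau'(E) = E'$. Then each of the $t$ compositions $\tau' \circ \tau_i$ also sends $E$ to $E'$, and these are pairwise distinct because $\tau' \circ \tau_i = \tau' \circ \tau_j$ forces $\tau_i = \tau_j$ after cancelling $\tau'$ on the left. Conversely, if $\tau''(E) = E'$, then $(\tau')^{-1} \circ \tau'' \in Aut(E)$, so $\tau'' = \tau' \circ \tau_i$ for some $i$. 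Hence exactly $t$ transformations in $G$ map $E$ to $E'$.

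Since the $q^4 - q^3$ transformations partition into fibers of uniform size $|Aut(E)|$ over the distinct images, the number of unique curves isomorphic to $E$ is $\tfrac{q^4 - q^3}{|Aut(E)|}$, as claimed. The only mildly non-trivial step is verifying closure of $G$ under composition so that the orbit-stabilizer argument is legitimate; the rest is a straight adaptation of the reduced-form proof, with the larger transformation group accounting for the factor $q^3$ instead of $1$ alongside the $u$-count.
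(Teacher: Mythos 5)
Your proposal is correct and takes essentially the same approach as the paper: the paper's proof simply counts the $(q-1)q^3 = q^4-q^3$ admissible transformations and invokes the fiber-counting argument of \textit{Theorem \ref{rwt2}}, which is exactly what you do. You merely spell out the orbit--stabilizer details (closure of the transformation group and the converse inclusion of fibers) more explicitly than the paper does.
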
 
\begin{proof}
    Considering the transformation function $\tau : (x,y) \rightarrow (u^2x + r, u^3y + u^2sx + t), u \in \; \mathbb{F}_q^*$ and Theorem \ref{rwt2}, the result follows directly.
\end{proof}
Waterhouse \cite{waterhouse1969abelian} (see also \cite{schoof1987nonsingular}) counted the number of isomorphism classes of the elliptic curve defined over finite field $\mathbb{F}_q$ that holds an extreme motivation behind this work.   
\begin{theorem} [Proposition $5.7$, \cite{schoof1987nonsingular}] 
      The number of isomorphism classes of generalized Weierstrass elliptic curves over $\mathbb{F}_q$ is $C_g(\mathbb{F}_q) = 2q + 3 + \left(\frac{-4}{q}\right) + 2\left(\frac{-3}{q}\right)$, where $\left(.\right)$ denotes the Jacobi symbol.  
\label{igw}      
\end{theorem}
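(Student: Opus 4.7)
The plan is to combine Theorem \ref{nec} with Conjecture \ref{gwt1} in exactly the way Theorem \ref{irw} combined Theorems \ref{rwt1} and \ref{rwt2}, yielding the orbit--counting identity
\begin{equation}
\sum_{E_k} \frac{q^4 - q^3}{|\mathrm{Aut}(E_k)|} \;=\; q^5 - q^4,
\label{eq:gworbit}
\end{equation}
where the sum runs over isomorphism class representatives of nonsingular generalized Weierstrass curves over $\mathbb{F}_q$. The strategy is then to partition the representatives by the possible values of $|\mathrm{Aut}(E_k)|$, count each part, and verify that the total matches $2q + 3 + \left(\frac{-4}{q}\right) + 2\left(\frac{-3}{q}\right)$ in every characteristic.

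First I would dispatch the case $\mathrm{char}(\mathbb{F}_q) \neq 2, 3$. Here the substitutions $(x,y) \mapsto (x, y - a_1 x/2 - a_3/2)$ and $(x,y) \mapsto ((x - 3b_2)/36,\, y/216)$ described in the Preliminaries transform every generalized Weierstrass equation into a reduced one and commute with the isomorphism relation, so the isomorphism classes of generalized curves biject with those of reduced curves. Hence Theorem \ref{irw} computes the count, and it only remains to check the arithmetic: $\left(\frac{-4}{q}\right) = \left(\frac{-1}{q}\right)$ is $+1$ or $-1$ according to $q \bmod 4$, while $\left(\frac{-3}{q}\right)$ is $+1$ or $-1$ according to $q \bmod 3$. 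Tabulating these against $q \equiv 1, 5, 7, 11 \pmod{12}$ reproduces the four values $2q + 6, 2q+2, 2q+4, 2q$ of Theorem \ref{irw}, which confirms the formula in odd characteristic $> 3$.

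For $\mathrm{char}(\mathbb{F}_q) = 2$ and $\mathrm{char}(\mathbb{F}_q) = 3$ the reduced form is unavailable and I would appeal to the standard normal forms instead. In characteristic $2$ one splits by $j$-invariant: curves with $j(E) \neq 0$ admit the normal form $y^2 + xy = x^3 + a_2 x^2 + a_6$ with $|\mathrm{Aut}(E)| = 2$, whereas $j(E) = 0$ forces the form $y^2 + a_3 y = x^3 + a_4 x + a_6$ whose automorphism group jumps to $24$. In characteristic $3$ the analogous split is between $j(E) \neq 0$, with normal form $y^2 = x^3 + a_2 x^2 + a_6$ and $|\mathrm{Aut}(E)| = 2$, and $j(E) = 0$, with $y^2 = x^3 + a_4 x + a_6$ and $|\mathrm{Aut}(E)| = 12$. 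For each family I would enumerate the admissible parameters $(u, r, s, t)$ of the isomorphism map fixing the coefficients to recover $|\mathrm{Aut}(E)|$, count the nonsingular curves in the family, and then apply \eqref{eq:gworbit}. Summing across the $j$-classes and interpreting the Kronecker symbols at $q$ even and $q$ a power of $3$ via the conventions $\left(\frac{-4}{2^m}\right) = 0$ and $\left(\frac{-3}{3^m}\right) = 0$ should exactly produce the claimed $N_q$.

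The main obstacle I anticipate is the characteristic-$2$ and characteristic-$3$ book-keeping, because the automorphism group at the supersingular $j$-value is nonabelian (of order $24$ in characteristic $2$ and of order $12$ in characteristic $3$) and its action on parameters depends on whether $a_6$ is a cube or a square and whether certain Artin--Schreier equations have solutions in $\mathbb{F}_q$. Counting these orbits uses the structure of $\mathbb{F}_q^*$ modulo cubes and of $\mathbb{F}_q$ modulo the Artin--Schreier image, and this is precisely where the residual contributions captured by $\left(\frac{-4}{q}\right)$ and $\left(\frac{-3}{q}\right)$ enter. Once this delicate part is settled, the remaining characteristic $\neq 2, 3$ branch follows immediately from Theorem \ref{irw}.
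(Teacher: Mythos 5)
The paper does not actually prove this statement: its ``proof'' is a one-line citation to Proposition 5.7 of Schoof's paper, so any comparison is between your outline and the standard literature rather than an argument in the text. Your characteristic $\neq 2,3$ branch is complete and correct: the reduction to short Weierstrass form preserves isomorphism classes, and your table check is right, since $\left(\frac{-4}{q}\right)=\left(\frac{-1}{q}\right)$ tracks $q \bmod 4$ and $\left(\frac{-3}{q}\right)$ tracks $q \bmod 3$, reproducing $2q+6,\,2q+2,\,2q+4,\,2q$ for $q\equiv 1,5,7,11 \pmod{12}$ exactly as in Theorem \ref{irw}.

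Two genuine gaps remain, however. First, your mass formula $\sum_{E_k}(q^4-q^3)/|Aut(E_k)| = q^5-q^4$ takes the total count $q^5-q^4$ as an input, but within this paper that count is only Conjecture \ref{gwt1}; in characteristic $2$ and $3$, where you cannot fall back on the reduced-form count of Theorem \ref{rwt1}, you would have to prove it (it is true, but the argument via completing the square/cube in the nonsingular locus is not supplied). Second, and more seriously, the characteristic $2$ and $3$ cases are a plan rather than a proof. The orders $24$ and $12$ you quote for the supersingular $j=0$ curves are the orders of the geometric automorphism group over $\overline{\mathbb{F}}_q$; over $\mathbb{F}_q$ itself $|Aut(E)|$ ranges over various divisors ($2$, $4$, $8$, $24$ in characteristic $2$; $2$, $4$, $6$, $12$ in characteristic $3$) depending on $q \bmod$ small integers and on the particular twist, and the entire content of the correction terms $\left(\frac{-4}{q}\right)+2\left(\frac{-3}{q}\right)$ lives in that case analysis. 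You correctly identify where the difficulty sits (cubes in $\mathbb{F}_q^*$, Artin--Schreier images), but you do not carry out the enumeration, so the theorem is not established in even characteristic or characteristic $3$ by what you have written. To close the argument you would either need to execute that orbit count explicitly or, as the paper does, invoke Schoof's Proposition 5.7 directly.
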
 

Before going into the analysis over finite rings $\mathbb{Z}_{n}$, in the following section, we present some of the number theoretic results on $k^{th}$ power residue over $\mathbb{Z}_n$, which helps us in finding upper bound on the number of reduced Weierstrass elliptic curves over $\mathbb{Z}_{p^m}$ for odd prime $p$.

\section{Distribution of $k^{th}$ Power Residue over $\mathbb{Z}_{n}$}
\label{distKth}

 Distribution of $k^{th}$ power residue over $\mathbb{Z}_{n}$ refers to determining the total number of $k^{th}$ power residues and their respective multiplicities over $\mathbb{Z}_{n}$.The problem of counting general power residues over $\mathbb{Z}_n$ has been studied in works such as \cite{stangl1996counting}, \cite{MatharConj} and \cite{serajcounting}. In this section, we analyze the multiplicity of power residues. Multiplicity of a $k^{\text{th}}$ power residue refers to the total number of distinct values of $x$ that satisfy the congruence $x^k \equiv a \pmod{n}$. We derive theorems for the multiplicity of $k^{\text{th}}$ power residues over odd prime power moduli. 

 Unless stated otherwise, throughout the following definitions, $n$ is assumed to be a natural number, i.e., $n \in \mathbb{N}$.
 \begin{definition}
An element $a \in \mathbb{Z}_n $ is called a unit if $\gcd(a, n) = 1 $; otherwise, $a$ is called a non-unit. Units are precisely the elements that are coprime to $n$.
\end{definition}

\begin{definition}
Euler’s totient function $\phi(n)$ is the number of unit elements in $ \mathbb{Z}_n$.
\end{definition}

\begin{definition}  \cite{rosen2011elementary}
An integer $g$ is called a primitive root modulo $n$ if for every integer $a$ that is coprime to $n$, there exists an integer $k$ such that $ g^k \equiv a \pmod{n} $. Here, $g$ is a generator of the multiplicative group of integers modulo $n$ and $k$ is called the index or discrete logarithm of $a$ to the base $g$ modulo $n$ denoted as $\operatorname{ind}_g a$. Note that, a primitive root exists if and only if $n$ is 1, 2, 4, $p^k$ or $2p^k$, where $p$ is an odd prime and $k > 0$.
\end{definition}

\begin{definition}
A quadratic residue ($QR$) modulo $n$ is an integer $a$ for which there exists an integer $x$ such that $x^2 \equiv a \pmod{n}$ (i.e., $a$ is a perfect square modulo $n$). If no such $x$ exists, then $a$ is called a quadratic nonresidue ($NQR$) modulo $n$. For instance, over the ring $\mathbb{Z}_5$, the elements $0$, $1$, and $4$ are $QRs$, since these are the values attained by squaring elements modulo $5$. The remaining elements of $\mathbb{Z}_5$ are thus $NQR$s.
\end{definition}

\begin{definition}
An integer $a$ is a $k^{th}$ power residue ($k \geq 2$) modulo $n$ if there exists an integer $x$ such that $x^k \equiv a \pmod{n}$.
\end{definition}

We employed computational methods to analyze the distribution of quadratic, cubic residues ($CR$s, $k=3$) and six-th power residues ($k = 6$) modulo $p^m$, where $p$ is odd prime. Computational data \cite{ECCguptalab} indicated that power residues exhibit distinct multiplicity patterns, allowing a distribution of $k^{th}$ power residues based on their multiplicity.  

For a given $u \in \mathbb{Z}_{p^m}$ and a positive integer $k$, we define the solution set $A^k_u(p^m) = \{x \in \mathbb{Z}_{p^m} \mid x^k \equiv u \pmod{p^m} \}$. The cardinality of the set $A^k_u(p^m)$, that is, the number of solutions to the congruence plays a fundamental role in understanding and classifying the $k^{th}$ power residues based on the increasing order of the multiplicity (excluding $u = 0$).

The subsequent example states the distribution of QRs over $\mathbb{Z}_{25}$ and $\mathbb{Z}_{3125}$, in detail and hence devises the framework.

\begin{example}
    \cite{ECCguptalab} Consider the ring $\mathbb{Z}_{25}$ and the following Tables \ref{tbl1} and \ref{tbl2}.
\begin{table}[h]
\centering 
\begin{tabular}{|c|c|c|c|c|}
\hline
$u$ & $A^2_u (25)$ & $|A^2_u (25)|$ \\  
\hline
$0$ & $\{0,5,10,15,20\}$ & $5$ \\
\hline
$1$ & $\{1,24\}$ & $2$ \\
\hline
$4$ & $\{2,23\}$ & $2$ \\ 
\hline
$6$ & $\{9,16\}$ & $2$ \\
\hline
$9$ & $\{3,22\}$ & $2$ \\
\hline
$11$ & $\{6,19\}$ & $2$ \\ 
\hline
$14$ & $\{8,17\}$ & $2$ \\
\hline
$16$ & $\{4,21\}$ & $2$ \\
\hline
$19$ & $\{12,13\}$ & $2$ \\
\hline
$21$ & $\{11,14\}$ & $2$ \\
\hline
$24$ & $\{7,18\}$ & $2$ \\
\hline

\end{tabular}
\caption{Analysis of the QRs over $\mathbb{Z}_{25}$.}
\label{tbl1}
\end{table}
Based on the multiplicities of the QRs (excluding $u = 0$), we present the class $[1]$ of the QRs over $\mathbb{Z}_{25}$. Similarly, the classification of the QRs (excluding $0$) over $\mathbb{Z}_{3125}$ is given in Table \ref{tbl3}.
\begin{table}[h]
\centering
\begin{tabular}{|c|c|c|}
\hline
Classes & multiplicity of $u$  & Total no. of  \\ 
& & $u$ with same multiplicity \\
\hline
 $[1]$ & $2$ & $10$       \\ \hline
\end{tabular}
\caption{Classification of QRs over $\mathbb{Z}_{25}$.}
 \label{tbl2} 
\end{table}

\begin{table}[h]
\centering
\begin{tabular}{|c|c|c|}
\hline
Classes & multiplicity of $u$  & Total no. of  \\ 
& & $u$ with same multiplicity \\
\hline
 $[1]$ & $2$ & $1250$       \\ \hline
 $[2]$ & $10$ & $50$       \\ \hline
 $[3]$ & $50$ & $2$       \\ \hline
\end{tabular}
\caption{Classification of QRs over $\mathbb{Z}_{3125}$.}
 \label{tbl3} 
\end{table}

\label{ex_class}
\end{example}
Equivalently, the classifications for the cubic and sixth-order residues are derived and presented in \cite{ECCguptalab}. 
Using computational data for $QRs$, $CRs$, and sixth-order residues ($k = 2, 3, 6$), we derive several results on $k^{th}$ power residues that appear to be largely unexplored in the existing literature, with the exception of the recent work by S. Samer\cite{serajcounting}.

\begin{lemma}

(\cite{euler_theoremata_1750, euler_theoremata_1761} Euler's criterion) Let $n \geq 2$ be an integer such that there exists a primitive root modulo $n$. Let $a \in \mathbb{Z}_{n}$ be an integer coprime to $n$, and $k \geq 2$ be an integer. Then the congruence
$$
x^k \equiv a \quad(\bmod \;n)
$$
has a solution $x$, meaning $a$ is $k^{th}$ power residue modulo $n$, if and only if
$$
a^{\ell} \equiv 1 \quad(\bmod\; n),
$$
where $\ell=\frac{\phi(n)}{\gcd(k, \phi(n))}$. The number of such $k^{th}$ power residues of $n$ is $\ell$ \cite{euler_criteria}. 
\label{col2}
\end{lemma}
We also observe the following Lemma (its proof is easy to see, so omitted), which may be known in the literature but we could not find it.  

\begin{lemma} Each unit $k^{th}$ power residue $a$ over $\mathbb{Z}_n$ (where $n \geq 2$ be an integer such that there exists a primitive root modulo $n$) is the $k^{th}$ power residue of exactly $\gcd(k, \phi(n))$ integers modulo $n$.
\label{occ_Kth_residue}
\end{lemma}



\begin{lemma}
    \cite{serajcounting} For odd prime $p$ and positive integer $m > k$, if $b$ is a $k^{th}$ power residue in $\mathbb{Z}_{p^{m-k}}$, then $bp^k$ will be a $k^{th}$ power residue in $\mathbb{Z}_{p^m}$.    
    \label{inn_qr}
\end{lemma}


 This shows that for the case of non-unit (except 0) $k^{th}$ power residue $a$ in $\mathbb{Z}_{p^m}$; $\gcd(a,p^m)=p^s$,  $s$ will be divisible by k. Using this fact, it is straightforward that for $m \leq k$, all the $k^{th}$ residues are units in $\mathbb{Z}_{p^m}$.  

\begin{theorem}
  For odd prime $p$ and a positive integer $m > k$, if $b$ is a $k^{\text{th}}$ power residue in $\mathbb{Z}_{p^{m-k}}$ with multiplicity $\omega$, then the element $bp^k$, which is a $k^{\text{th}}$ power residue in $\mathbb{Z}_{p^m}$, has multiplicity $p^{k-1} \omega$.

    \label{occnu}
\end{theorem}
\begin{proof}
  Using Lemma \ref{inn_qr}, let $bp^k$ be the non-unit $k^{th}$ power residue in $\mathbb{Z}_{p^m}$, 
    \begin{equation}
      y^k \equiv bp^k \pmod{p^m} 
      \label{ykth}
    \end{equation}
 where  $b$ is the $k^{th}$ power residue in $\mathbb{Z}_{p^{m-k}}$.
  \begin{equation}
      x^k \equiv b\pmod{p^{m-k}}.
      \label{xkth}
  \end{equation}
  Finding the multiplicity of non-unit $k^{th}$ power residues in $\mathbb{Z}_{p^{m}}$ is equivalent to finding the solution of equation \ref{ykth}. It can be obtained by multiplying $p^k$ both sides to the equation \ref{xkth} so we will have $y=xp$, which shows that different values of $y$ ($y'$) can be obtained by multiplying $p$ to the different values of $x$ ($x'$). 
   All corresponding solutions of $x$ can be written as  $x = x' + i p^{m-k}$, for $i \in \{0, 1, \dots, p^{k} - 1\}$ in $\mathbb{Z}_{p^{m}}$.
   Hence,
  \begin{align*}
      y' = (x' + ip^{m-k})p \\
      y' = x'p + ip^{m-k+1}.
  \end{align*}
  Values of $i \in \{0,1,...,p^{k-1}-1\}$ so that there are unique solutions in  $\mathbb{Z}_{p^{m}}$. Now if $\omega$ is the total number of possible $x$ (multiplicity of $b$) then the total number of possible $y$ (multiplicity of $bp^k$) will be $ p^{k-1}\omega$. Therefore result follows. 
  \end{proof}
We have the following Lemma for the multiplicity of $0$ as $k^{th}$ power residue in $\mathbb{Z}_{p^m}$.
\begin{lemma}
Taking into account $A^k_0 (p^m) = \{x\ |\ x^k \equiv 0 \pmod{p^m} \}$, we have $|A^k_0(p^m) | = p^{\lfloor \frac{(k-1)m}{k}\rfloor}$. Here $p$ is odd prime number.
 \label{lemx2eq0}
\end{lemma}
\begin{proof}
It is obvious to note that any solution $x$ to the equation $x^k \equiv 0 \pmod{ p^m}$ must be a non-unit in $\mathbb{Z}_{p^m}$. Thus, $x=rp^s$, where $r \in \mathbb{Z}_{p^m}$ is coprime to $p$ (i.e., $r$ is a unit). Now,
$$x^k \equiv r^k p^{sk} \equiv 0 \pmod{p^m}.$$
Since $r$ is a unit in $\mathbb{Z}_{p^m}$, $r^k$ is also a unit and does not contribute factors of $p$. Therefore, the condition reduces to $sk \geq m$, which is equivalent to $s \geq \frac{m}{k}$. Since $s$ must be an integer, this means $s \geq \lceil\frac{m}{k}\rceil$. The values of $x = r p^{s}$ in $\mathbb{Z}_{p^m}$ where $s \geq \lceil\frac{m}{k}\rceil$ are precisely the multiples of $p^{\lceil\frac{m}{k}\rceil}$ in $\mathbb{Z}_{p^m}$. There are total $\frac{p^m}{p^{\lceil \frac{m}{k}\rceil}}$ such multiples (including zero) and it can be written as below :
$$p^{m - \lceil \frac{m}{k} \rceil} = p^{\lfloor \frac{(k-1)m}{k}\rfloor}.$$
Hence proved.
\end{proof}
Note that for QRs and CRs taking $k=2$ and $k=3$ in above Lemma gives us:
\begin{itemize}
    \item[i)] $|A^2_0(p^m)| = p^{\lfloor\frac{m}{2}\rfloor}$
    \item[ii)] $|A^3_0(p^m)| = p^{\lfloor\frac{2m}{3}\rfloor}$.
\end{itemize}
\begin{theorem}
    Consider odd prime $p$ and the $i^{th}$ class of $k^{th}$ power residue $a \in \mathbb{Z}_{p^m}$ (excluding zero) where $gcd(a,p^m) = p^{k\cdot (i-1)}$.
    Then, the multiplicity of $k^{th}$ power residues in class $[i]$ of $\mathbb{Z}_{p^m}$, is 
    $gcd(k,\phi(p^m)).p^{(i-1)(k-1)}$ where $1\leq i \leq \left \lfloor \frac{m-1}{k} \right\rfloor+1$.
    \label{occkth}
\end{theorem}
\begin{proof}
    Using Lemma \ref{inn_qr}, we can see that there will be $\left \lfloor \frac{m-1}{k} \right\rfloor+1$ classes for $k^{th}$ power residues over $\mathbb{Z}_{p^m}$. Using Lemma \ref{occnu}, It is easy to see the recursion formula below in this case where $\Omega(i,p^m)$ represents the multiplicity of $k^{th}$ power residues of $i$-th class in $\mathbb{Z}_{p^m}$. Note that $\Omega(i,p^m)$ depends on the $({i-1})^{th}$ class in $\mathbb{Z}_{p^{m-k}}$ as per the definition of the class.
    $$\Omega(i,p^m) = p^{k-1}\cdot \Omega(i-1,p^{m-k})$$, 
    where $2 \leq i \leq \left \lfloor \frac{m-1}{k} \right\rfloor+1$. The base condition will be the multiplicity of $1$-st class of $k^{th}$ power residues (residues that are units), i.e., $\Omega(1,p^r)=gcd(k,\phi(p^r))\;r \in \mathbb{Z}\; (\because Lemma\;\ref{occ_Kth_residue})$. 
\begin{equation*}
    \begin{split}
        \Omega(i,p^m) &= p^{k-1}. \Omega(i-1,p^{m-k})\\
        &=  p^{k-1}.p^{k-1}. \Omega(i-2,p^{m-2k})\\
        &= p^{2(k-1)} . \Omega(i-2,p^{m-2k})\\
        &= p^{3(k-1)} . \Omega(i-3,p^{m-3k})\\
        & \vdots\\
        &= p^{(i-1)(k-1)}. \Omega(1,p^m)\\
        \Omega(i,p^m)&= gcd(k,\phi(p^m)).p^{(i-1)(k-1)}.
    \end{split}
    \end{equation*}
     Hence proved.
    \end{proof}
Using Theorem \ref{occkth}, one can derive the multiplicity of quadratic residues and cubic residues:
\begin{itemize}
    \item The multiplicity of $QR$s in class $[i]$ of $\mathbb{Z}_{p^m}$ is $2p^{i-1}$ where $1 \leq i \leq \left \lfloor \frac{m-1}{2} \right\rfloor+1$.
    \item For $p \equiv 2 \pmod 3$, the multiplicity of $CR$s  in class $[i]$ of $\mathbb{Z}_{p^m}$, is $p^{2(i-1)}$ and for $p \equiv 1 \pmod 3$, the multiplicity of $CR$s in class $[i]$ of $\mathbb{Z}_{p^m}$ is $3p^{2(i-1)}$ where $1\leq i \leq \left \lfloor \frac{m-1}{3} \right\rfloor+1$.
\end{itemize}
For example, the multiplicity of QRs in class [3] of $\mathbb{Z}_{3125}$ is $2\cdot5^{3-1} = 2\cdot5^{2} = 50$ which is same as presented at table \ref{tbl3}.

We have arrived at the following general result based on Lemma \ref{col2} and Lemma \ref{inn_qr} and by observing computational data for $k^{th}$ power residues, mainly $k=2$, $k=3$ and $k=6$. Total number of $k^{th}$ power residues over $\mathbb{Z}_{p^m}$ can be obtained by adding number of $k^{th}$ power residues in each class (starting from 0 in below lemma). This formula coincides with the formula given in Theorem 3.2 of \cite{serajcounting}.
\begin{lemma}
     The number of $k^{th}$ power residues over $\mathbb{Z}_{p^m}$ where $p$ is an odd prime number, is
    \begin{equation*}
       \sum_{i=0}^{\left \lfloor \frac{m-1}{k} \right\rfloor}\frac{\phi(p^{m-ki})}{gcd(k,\phi(p^{m-ki}))}+1.    
    \end{equation*}
    \label{conj3}   
\end{lemma}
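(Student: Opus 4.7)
The plan is to partition the elements of $\mathbb{Z}_{p^m}$ by their $p$-adic valuation, and count squares at each valuation level using the cyclic structure of $(\mathbb{Z}_{p^k})^*$ for odd $p$.

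First, I would observe that every $u \in \mathbb{Z}_{p^m}$ with $u \neq 0$ admits a unique representation $u = p^j v$ with $0 \leq j \leq m-1$ and $\gcd(v,p) = 1$, where $v$ is considered as a representative in $\{1,2,\ldots, p^{m-j}-1\}$ coprime to $p$. So as $u$ ranges over the nonzero elements of $\mathbb{Z}_{p^m}$, the pair $(j, v)$ ranges bijectively over pairs with $0 \le j \le m-1$ and $v$ a unit modulo $p^{m-j}$ (giving $\Phi(p^{m-j})$ choices of $v$ for each $j$, which correctly sums to $p^m - 1$).

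Next I would analyze when such $u = p^j v$ is a QR. Writing any potential square root $x = p^a y$ with $\gcd(y,p)=1$ (or $x \equiv 0$), the congruence $x^2 \equiv u \pmod{p^m}$ forces $p^{2a}y^2 \equiv p^j v \pmod{p^m}$. Valuation considerations show that if $j < m$, the only way for a solution to exist is $j = 2a$ and $y^2 \equiv v \pmod{p^{m-2a}}$. Hence $j$ must be even, say $j = 2a$ with $0 \le a \le \lfloor (m-1)/2 \rfloor$, and $v$ must be a quadratic residue in the unit group $(\mathbb{Z}_{p^{m-2a}})^*$. Conversely, any $v$ that is a unit QR modulo $p^{m-2a}$ yields a QR $u = p^{2a} v$ in $\mathbb{Z}_{p^m}$, by lifting a square root $y$ and setting $x = p^a y$.

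The counting step then uses the fact that for odd prime $p$ the unit group $(\mathbb{Z}_{p^k})^*$ is cyclic of order $\Phi(p^k)$, so exactly $\Phi(p^k)/2$ of its elements are squares. Therefore, for each valid $a$, the number of QRs $u$ with $v_p(u) = 2a$ equals $\tfrac{1}{2}\Phi(p^{m-2a})$. Summing over $a$ from $0$ to $\lfloor (m-1)/2 \rfloor$ and adding $1$ for $u = 0$ (which trivially satisfies $0^2 \equiv 0$) gives
\begin{equation*}
\#\{\text{QRs in } \mathbb{Z}_{p^m}\} \;=\; 1 + \sum_{a=0}^{\lfloor (m-1)/2 \rfloor} \tfrac{1}{2}\,\Phi(p^{m-2a}),
\end{equation*}
which is the stated formula. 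I expect the main subtle point is justifying the ``converse'' direction, namely that any unit QR modulo $p^{m-2a}$ lifts to a genuine QR modulo $p^m$ when multiplied by $p^{2a}$; this is essentially Hensel's lemma combined with careful bookkeeping of which residue class modulo $p^m$ the resulting $u = p^{2a}v$ lands in, ensuring no double-counting across different valuation strata.
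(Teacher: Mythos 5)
Your proof is correct, and it supplies something the paper itself does not: the paper states this lemma without proof, arriving at the formula empirically by classifying the quadratic residues of $\mathbb{Z}_{p^m}$ into occurrence classes $[i]$ (as in its Example on $\mathbb{Z}_{25}$ and $\mathbb{Z}_{3125}$), ``summing the results for each class,'' and deferring to the cited literature (Stangl's counting paper and OEIS A000224) for rigor. Your valuation stratification is the rigorous version of exactly that classification: your stratum of residues $u$ with $v_p(u)=2a$ is the paper's class $[a+1]$, the $2p^{a}$ square roots per residue in that stratum are the paper's ``occurrences,'' and your count $\tfrac{1}{2}\Phi(p^{m-2a})$ per stratum matches the paper's ``total number of $u$ with the same occurrence.'' The argument itself is sound: the decomposition $u=p^{2a}v$ with $v$ a unit modulo $p^{m-2a}$ is injective into $\mathbb{Z}_{p^m}$ (so no double-counting across strata), odd valuations and units that are non-residues are correctly excluded by comparing valuations of $x^2$ and $u$, and the converse direction needs no Hensel lifting at all --- if $y^2\equiv v \pmod{p^{m-2a}}$ then $p^{2a}y^2 \equiv p^{2a}v \pmod{p^m}$ directly, so the one ``subtle point'' you flag is in fact immediate. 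The only hypothesis worth stating explicitly is that $p$ is odd, which is what makes $(\mathbb{Z}_{p^k})^*$ cyclic of even order and gives the factor $\tfrac{1}{2}$; the formula fails for $p=2$, as the paper implicitly acknowledges by pointing to separate OEIS entries for that case.
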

For instance, 
     if $p \equiv 2 \pmod 3$, the number of sixth residues in $\mathbb{Z}_{p^m}$ is 
    
    \begin{equation*}
       (\sum_{i=0}^{\left \lfloor \frac{m-1}{6} \right\rfloor}\frac{1}{2}\phi(p^{m-6i}))+1    
    \end{equation*}
    and For $p \equiv 1 \pmod 3$, the number of sixth residues in $\mathbb{Z}_{p^m}$, is 

    \begin{equation*}
       (\sum\limits_{i=0}^{ \left \lfloor \frac{m-1}{6} \right\rfloor}\frac{1}{6}\phi(p^{m-6i}))+1.    
    \end{equation*}
    \label{6R_conj}
To identify whether a unit element of $\mathbb{Z}_{p^m}$ is a $k^{th}$ power residue or not, we can use Euler's criterion, but for a non-unit element of $\mathbb{Z}_{p^m}$, we have not seen a rule which can tell if it is a $k^{th}$ power residue or not. Hence, we are giving the following lemma. The $k^{th}$ power identification rule serves as a natural extension of the Euler criterion to $\mathbb{Z}_{p^m}$. Although the result is straightforward to establish, to the best of our knowledge, it has not been formally presented in the literature.
\begin{lemma}
    Euler's Criterion over $\mathbb{Z}_{p^m}, \forall a \in \mathbb{Z}_{p^m}$ including units and non units\\
    Let $gcd(a, p^m) = p^s$, $a = \gamma p^s \in \mathbb{Z}_{p^m}$ and $k \geq 2$ be an integer,  the congruence equation $x^k \equiv a \pmod {p^{m}}$ has solution if and only if below condition holds,
    \begin{equation*}
        s \equiv 0 \pmod k\;\; \text{and}\;\; \gamma^{\frac{\phi(p^{m-s})}{gcd(k,\phi(p^{m-s}))}} \equiv 1 \pmod {p^{m-s}}.
    \end{equation*}
    \label{egec}
\end{lemma}
\begin{proof}
For a non-unit element \( a = \gamma p^s \) in \( \mathbb{Z}_{p^m} \), it follows from Lemma \ref{inn_qr} that \( s = k\mu \), where \( \gamma \) is a unit \( k^{\text{th}} \) power residue in \( \mathbb{Z}_{p^{m-s}} \). Therefore, to determine whether \( a \) is a \( k^{\text{th}} \) power residue, it suffices to verify that \( s \) is divisible by \( k \) and that \( \gamma \) is a unit \( k^{\text{th}} \) power residue in \( \mathbb{Z}_{p^{m-s}} \). Hence, the claim is proved. For the unit element $a$ the values of $\gamma$ will be units in $\mathbb{Z}_{p^m}$ and $s = 0$.\\ 

\end{proof}

\section{Classification of Weierstrass Elliptic Curves over $\mathbb{Z}_n$}
\label{cla}









The elliptic curves over finite rings are studied similarly as presented in the finite field cases.
However, due to the generalized nature of rings, searching for the roots of a polynomial of degree $n$ is a fundamental problem.  
In this work, we are explicitly interested in devising a method to classify the elliptic curves over $\mathbb{Z}_n$ based on the underlying fundamental characteristics that closely align with the computational data.

The generalized Weierstrass equation of elliptic curves over $\mathbb{Z}_n$ can be represented as
\begin{equation}
 \begin{split}
        E  : y^2 + a_1xy + a_3y &=  x^3 + a_2x^2 + a_4x + a_6
 \end{split}
    \label{wcr}
    \end{equation}
    
Where $a_1, a_2,a_3,a_4, a_6 \in\;\mathbb{Z}_n$. The corresponding discriminant and $j$-invariant for the generalized Weierstrass equation can be obtained as, 

 \begin{equation*}
          \begin{aligned}
            \Delta & =-b_2^2 b_8-8 b_4^3-27 b_6^2+9 b_2 b_4 b_6\\
            j(E) &= c_4^3/\Delta \\
            c_4 & =b_2^2-24 b_4 \\
            b_2 & =a_1^2+4 a_2 \\
            b_4 & =2 a_4+a_1 a_3 \\
            b_6 & =a_3^2+4 a_6 \\
            b_8 & =a_1^2 a_6+4 a_2 a_6-a_1 a_3 a_4+a_2 a_3^2-a_4^2, 
      \end{aligned}
\end{equation*}

where discriminant $\Delta \in \mathbb{Z}_{n}^{*}$ (definition 4.3.1, \cite{schmaleelliptic}) and relating the transformation mapping as stated in the case of $\mathbb{F}_q$, we can obtain the  Weierstrass elliptic curves with the general transformation $\tau : (x,y) \rightarrow (u^2x + r, u^3y + u^2sx + t), u \in \; \mathbb{Z}_n^*, r,s,t \in \mathbb{Z}_n$.
Moreover, we are curious about the reduced equations of the elliptic curve over $\mathbb{Z}_n$, where $char(\mathbb{Z}_n) \nmid  2,3$ (section $3$, \cite{lenstra1986elliptic}).  


The reduced Weierstrass equation of elliptic curves over $\mathbb{Z}_n$ can be obtained by applying $(x,y) \xrightarrow{} (x, y - \frac{a_1}{2}x - \frac{a_3}{2})$ and $(x,y) \rightarrow (\frac{x-3b_2}{36}, \frac{y}{216})$ as admissible change of variables to Equation \ref{wcr}:
\begin{equation}
        E  : y^2 = x^3 + ax + b, \ \ a,b \in \mathbb{Z}_n\ \text{and}\ \gcd(6,n)=1.
    \end{equation}

Subsequently, the corresponding discriminant and $j$-invariant for the reduced Weierstrass equation will be in the form as,

 \begin{equation*}
    \begin{aligned}
            \Delta & = -16(4a^3+27b^2) \in \mathbb{Z}_{n}^{*}\\
            j(E) &= -1728\frac{4a^3}{\Delta}. \\
      \end{aligned}
\end{equation*}

For classifying these forms of the Weierstrass equation, the transformation function is $\tau : (x,y) \rightarrow (u^2x, u^3y),\; u \in \; \mathbb{Z}_n^*$.


  

We summarize the findings of this work in the following theorems. 
\subsection{Reduced Weierstrass Elliptic Curves over $\mathbb{Z}_n$}

Computational data through which we have arrived at these results are given in \cite{ECCguptalab}. 
Here, we set the central focus on devising the number of non-singular reduced Weierstrass elliptic curves over $\mathbb{Z}_n$.
Explicit proofs for $n=p^m$ and $n=p_1^{e_1}p_2^{e_2}\ldots p_k^{e_k}$ are outlined in Theorem \ref{nrwr} and Theorem \ref{nrc}, respectively. 
Upper bound ($N''_r(\mathbb{Z}_n)$) for $N_r(\mathbb{Z}_n)$, $n$ = $p^m$ is described in Theorem \ref{col5}.

 \begin{theorem}
  The number of reduced Weierstrass elliptic curves isomorphic to given curve $E_k/\mathbb{Z}_n$ is $N_r^{(k)}(\mathbb{Z}_n) = \frac{\phi(n)}{|Aut(E)|}$.
  \label{rwr}
 \end{theorem}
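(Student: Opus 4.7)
The plan is to mimic the proof of \textit{Theorem \ref{rwt2}} with $\mathbb{F}_q^{*}$ replaced by the unit group $\mathbb{Z}_n^{*}$, which has cardinality $\Phi(n)$. Since the admissible coordinate change for the reduced Weierstrass form over $\mathbb{Z}_n$ is $\tau_u : (x,y)\mapsto (u^2 x, u^3 y)$ with $u\in\mathbb{Z}_n^{*}$, I would first note that each such $u$ produces one reduced curve isomorphic to the fixed curve $E/\mathbb{Z}_n$, so there are at most $\Phi(n)$ transformations $\tau_u$ to track, and every curve isomorphic to $E$ must arise as $\tau_u(E)$ for at least one $u$.

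Next I would observe that composition gives $\tau_u\circ \tau_v = \tau_{uv}$, so the map $u\mapsto \tau_u$ is a group homomorphism from $\mathbb{Z}_n^{*}$ onto the acting group of transformations, and $Aut(E)$ is the preimage subgroup $\{u\in \mathbb{Z}_n^{*} : \tau_u(E)=E\}$. This identifies $Aut(E)$ with a subgroup of $\mathbb{Z}_n^{*}$; by Lagrange, $|Aut(E)|$ divides $\Phi(n)$, which is what makes the quotient in the statement well-defined (this is the point that fails in a naive generalization and the one I would be most careful about).

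Then, for any curve $E' = \tau_{u'}(E)$ isomorphic to $E$, I would show that the set of $u\in\mathbb{Z}_n^{*}$ with $\tau_u(E)=E'$ is exactly the left coset $u'\cdot Aut(E)$: the inclusion $\supseteq$ is immediate from $\tau_{u'u_0}(E)=\tau_{u'}(\tau_{u_0}(E))=\tau_{u'}(E)=E'$ for $u_0\in Aut(E)$, and the reverse inclusion follows by composing with $\tau_{u'}^{-1}=\tau_{(u')^{-1}}$, which is where the invertibility of $u'$ (hence the restriction $u\in\mathbb{Z}_n^{*}$) is essential. Each such coset therefore accounts for exactly $|Aut(E)|$ values of $u$.

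Finally, partitioning $\mathbb{Z}_n^{*}$ into these cosets gives exactly $\Phi(n)/|Aut(E)|$ distinct isomorphic curves, completing the argument. The main obstacle, as hinted above, is making sure that \emph{every} isomorphism in the reduced sense is realized by some unit $u$ (so that $\Phi(n)$ is really the correct total) and that no non-unit $u\in\mathbb{Z}_n$ can produce an admissible transformation. Both facts follow from the requirement that the inverse transformation exist over $\mathbb{Z}_n$, which forces $u^2$ (and hence $u$) to be a unit; once this is in place, the coset-counting argument transports verbatim from the finite-field case.
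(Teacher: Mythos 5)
Your proposal is correct and follows essentially the same route as the paper, which simply invokes the coset-counting argument of Theorem~\ref{rwt2} with $\mathbb{F}_q^{*}$ replaced by $\mathbb{Z}_n^{*}$ (so $q-1$ becomes $\Phi(n)$). Your writeup is in fact more careful than the paper's one-line proof — in particular the explicit identification of $Aut(E)$ with the stabilizer subgroup of $\mathbb{Z}_n^{*}$ and the appeal to Lagrange to justify that $|Aut(E)|$ divides $\Phi(n)$ — but the underlying argument is the same.
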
           

 \begin{proof}
Considering the transformation function $\tau : (x,y) \rightarrow (u^2x, u^3y), u \in \; \mathbb{Z}_n^*$ and Theorem \ref{rwt2}, the result follows directly.
 \end{proof}

\begin{theorem}
      For $gcd(p^m,6)=1$, the number of non-singular reduced Weierstrass elliptic curves over $\mathbb{Z}_{p^m}$ is $N_r(\mathbb{Z}_{p^m}) = \phi(p^{2m})$.
    
    \label{nrwr}
\end{theorem}

\begin{proof}
Let $E/\mathbb{Z}_{p^m}$ be a reduced Weierstrass elliptic curve given by
\[
E : y^2 = x^3 + ax + b,
\]
where $a,b \in \mathbb{Z}_{p^m}$ and $\gcd(p^m,6)=1$. 
The corresponding discriminant is
\[
\Delta = -16(4a^3 + 27b^2).
\]
Here, the constant $-16$ is a unit in $\mathbb{Z}_{p^m}$, and therefore
$\Delta$ is a unit in $\mathbb{Z}_{p^m}$ if and only if $4a^3 + 27b^2$ is a unit in $\mathbb{Z}_{p^m}$.

There are $p^{2m}$ ordered pairs $(a,b) \in \mathbb{Z}_{p^m}^2$. Hence, it suffices to count those pairs for which $4a^3 + 27b^2$ is a non-unit. If $4a^3 + 27b^2$ is a non-unit in $\mathbb{Z}_{p^m}$, then by Theorem~\ref{rwt1}, the congruence
\[
4a^3 + 27b^2 \equiv 0 \pmod{p}, 
\]
admits exactly $p$ solutions $(a_0,b_0) \in \mathbb{Z}_p^2$. Any lift of $(a_0,b_0)$ to $\mathbb{Z}_{p^m}$ is uniquely
expressible as
\[
a = a_0 + pA, \qquad b = b_0 + pB,
\]
with $A,B \in \{0,1,\dots,p^{m-1}-1\}$. Substituting into $4a^3 + 27b^2$ and expanding,
all terms except $4a_0^3 + 27b_0^2$ contain a factor of $p$. Thus,
\[
4a^3 + 27b^2 \equiv 4a_0^3 + 27b_0^2 \equiv 0 \pmod{p}
\]
for every choice of $A$ and $B$, and hence every such lift yields a non-unit in
$\mathbb{Z}_{p^m}$.

For each solution modulo $p$, there are $p^{m-1}$ choices for $A$ and $p^{m-1}$ choices
for $B$, giving $p^{2(m-1)}$ lifts. Since there are $p$ solutions modulo $p$, the total
number of pairs $(a,b) \in \mathbb{Z}_{p^m}^2$ for which the discriminant is a non-unit is
$p \cdot p^{2(m-1)} = p^{2m-1}$. Subtracting this from the total number of pairs yields
\[
N_r(\mathbb{Z}_{p^m}) = p^{2m} - p^{2m-1} = p^{2m-1}(p-1) = \phi(p^{2m}).
\]
This completes the proof.
\end{proof}
As an immediate consequence, the following result reflects the extension of Theorem \ref{nrwr} on the odd composite number $n=p_{1}^{e_1} p_{2}^{e_2} \ldots p_{l}^{e_l}$ such that $gcd(n,6)=1$.
 
\begin{theorem}
    For the odd composite number $n=p_{1}^{e_1} p_{2}^{e_2} \ldots p_{l}^{e_l}$ such that $gcd(n,6)=1$, $N_r(\mathbb{Z}_n) = \prod\limits_{i=1}^{l}N_r(\mathbb{Z}_{p_i^{e_i}}) = \phi(n^2)$. 
    \label{nrc}
\end{theorem}

\begin{proof}

The following expression can be established considering the class leaders $\mathbb{E}_k$s over $\mathbb{Z}_n$. 

 \begin{equation}
      \begin{split}
          N_{R}(\mathbb{Z}_n) &= \sum_{\mathbb{E}_k} N^{(k)}_r(\mathbb{Z}_{n}),  
      \end{split}
      \label{ncraute4}
  \end{equation}
where $N^{(k)}_R(\mathbb{Z}_{n})$ represents the total number of reduced Weierstrass elliptic curves over $\mathbb{Z}_{n}$ which are isomorphic to the class leader $\mathbb{E}_k$. 

Using Theorem \ref{rwr} and the following result, we can derive the expression for $N^{(k)}_R(\mathbb{Z}_{n})$ as follows.

    \begin{equation}
    \begin{split}
        |Aut(\mathbb{E}_k)|\; \text{over}\; \mathbb{Z}_n  = \prod\limits_{i=1}^{l} |Aut(\mathbb{E}_{k_i})|\; \text{over}\; \mathbb{Z}_{p_i^{e_i}}
        \text{ (Proposition 2.13, \cite{kayal2006complexity})}
    \end{split}
    \end{equation}
    

\begin{equation*}
    \begin{split}
        N^{(k)}_r(\mathbb{Z}_{n}) &= \frac{\phi(n)}{|Aut(E_k)|}\\
        &= \frac{\prod\limits_{i=1}^{l}\phi(p_i^{e_i})}{\prod\limits_{i=1}^{l}|Aut({E_k}_i)|}\\
        &= \prod_{i=1}^{l} \frac{\phi(p_i^{e_i})}{|Aut({E_k}_i)|}\\
      N^{(k)}_r(\mathbb{Z}_{n})  &= \prod_{i=1}^{l} N^{(k_i)}_{R}(\mathbb{Z}_{p_i^{e_i}}).
    \end{split}
\end{equation*}

Since each elliptic curve over $\mathbb{Z}_n$ reduces to a curve modulo each $p_i^{e_i}$, every class leader $\mathbb{E}_k$ over $\mathbb{Z}_n$ corresponds to a tuple $(\mathbb{E}_{k_1},\ldots,\mathbb{E}_{k_l})$ of class leaders over the $\mathbb{Z}_{p_i^{e_i}}$.

\begin{equation}
    \begin{split}
N_{R}(\mathbb{Z}_n) 
    &=  \sum_{\mathbb{E}_k} N^{(k)}_r(\mathbb{Z}_{n}) \\
    &=  \sum_{(k_1,\ldots,k_l)} \prod_{i=1}^{l} 
        N^{(k_i)}_{R}(\mathbb{Z}_{p_i^{e_i}}) 
        \quad\text{(each $\mathbb{E}_k$ corresponds to a tuple $(\mathbb{E}_{k_1},\ldots,\mathbb{E}_{k_l})$ )} \\
    &=  \prod_{i=1}^{l} \sum_{\mathbb{E}_{k_i}} 
        N^{(k_i)}_{R}(\mathbb{Z}_{p_i^{e_i}}) \\
N_{r}(\mathbb{Z}_n)    &=  \prod_{i=1}^{l} N_{r}(\mathbb{Z}_{p_i^{e_i}}).
    \end{split}
\label{ncraute5}
\end{equation}
By multiplicativity of $\phi$,
\begin{equation}
    \begin{split}
        N_{r}(\mathbb{Z}_n) &= \prod_{i=1}^{l} N_{r}(\mathbb{Z}_{p_i^{e_i}})\\
        &= \prod_{i=1}^{l} \phi({p_i^{2e_i}})\\
       N_{r}(\mathbb{Z}_n) &= \phi(n^2).
    \end{split}
\end{equation}
\end{proof}

We give upper bound on $N_r(\mathbb{Z}_{p^m})$ which uses results from Section 4.
Considering $\Delta^{i}(n)$ as the number of solutions to $\Delta\equiv i \pmod n$, we state the following claim that complies with the computational data attained separately.   

\begin{theorem}
    The upper bound $N''_r(\mathbb{Z}_{p^m})$ on the number of non-singular reduced Weierstrass elliptic curves over $\mathbb{Z}_{p^m}$ 
    is given by  
    \begin{equation}
        N_r(\mathbb{Z}_{p^m}) \leq N''_r(\mathbb{Z}_{p^m})  = (p^{2m}-\Delta^{0}(p^m)), 
    \end{equation}
    where $p$ is odd prime, $\gcd(p^m,6)=1$.
      \label{col5}
\end{theorem}
\begin{proof}

$\Delta \equiv 0 \pmod{p^m}$ leads to simple equation
   \begin{equation}
      \begin{split}
            s_1^3 + s_2^2 &\equiv 0 \pmod{p^m}  \\
          \implies  s_1^3 &\equiv - s_2^2 \pmod{p^m}\\
          \implies s_1^3 &\equiv (p^m - 1)\cdot s_2^2  \pmod{p^m}, 
          \label{crqrpm}
      \end{split}
      \end{equation}
      where $(3^{-1}a,2^{-1}b)=(s_1,s_2)$ since 27 and 4 are $CR$ and $QR$ over $\mathbb{Z}_{p^m}$ respectively and $2,3$ are units in $\mathbb{Z}_{p^m}$. We want to count cubic residues that follows Equation \ref{crqrpm}.  Each class $i$ of cubic residues and class $x$ of quadratic residues can be represented as $\gamma \cdot p^{3i}$ and $\gamma' \cdot p^{2x}$ respectively. Cubic residues those are in even class ($i \equiv 0 \pmod{2}$) can be represented in form of Equation \ref{crqrpm}. To get the number of solutions of Equation \ref{crqrpm},  the number of cubic residues in class $i , i \equiv 0 \pmod{2}$ is multiplied with multiplicity of those CRs and QRs in class $i$ and $\frac{3i}{2}$ respectively. Note that the number of those cubic residues in class $i$ is exactly half of the number of cubic residues in class $i$ where $i \equiv 0\pmod{2}$. This is because Equation \ref{crqrpm} covers the case of $CR = QR$ or $CR = NQR$ depending on whether $p \equiv 1 \pmod4$ or $p \equiv 3 \pmod4$. \\
      Sixth power residues in class $\frac{i}{2}$ are exactly the elements which are cubic residues (as well quadratic residues) of class $i, i \equiv 0 \pmod{2}$. Hence, the number of solution to $CR=QR$ in class $i$ is exactly the number of sixth power residues of class $\frac{i}{2}$. As per Lemma \ref{conj3}, it follows.

      If in class $i$ number of solution to $CR=QR$ is half of the number of $CR$, then it immediately follows that $CR=NQR$ is also exactly half of the number of $CR$. 

\begin{equation*}
\begin{split}
    \Delta^{0}(p^m) &= \sum_{\substack{i=0 \\ i\equiv0 \pmod{2}}}^{ \left \lfloor \frac{m-1}{3} \right\rfloor}\frac{\phi(p^{m-3i})}{2\cdot \gcd(3,\phi(p^{m-3i}))}\cdot \gcd(3,\phi(p^m))p^{2i} \cdot \gcd(2,\phi(p^m))p^{\frac{3i}{2}}\\
    & + p^{\left \lfloor \frac{m}{2} \right\rfloor + \left \lfloor \frac{2m}{3} \right\rfloor}\\
      &= \sum_{\substack{i=0 \\ i\equiv0 \pmod{2}}}^{ \left \lfloor \frac{m-1}{3} \right\rfloor}\frac{\phi(p^{m-3i})}{\gcd(6,\phi(p^{m-3i}))}\cdot \gcd(6,\phi(p^m))p^{\frac{7i}{2}}\\
    & + p^{\left \lfloor \frac{m}{2} \right\rfloor + \left \lfloor \frac{2m}{3} \right\rfloor}.\\
\end{split}
\end{equation*}

\end{proof}
The supporting data is shown in Table \ref{NSredLB}. 

\begin{table}[ht]
\centering
\begin{tabular}{|c|c|c|}
\hline
  $p^m$   &   Actual value ($N_r(\mathbb{Z}_{p^m})$) & Upper bound ($N^{''}_r(\mathbb{Z}_{p^m})$)                       \\ \hline
$5^5$                              & $7812500$ &9760000\\ \hline
$5^6$                            & $195312500$ & 244050000\\ \hline
$5^7$                          & $4882812500$ & $6103062500$                  \\ \hline
$7^5$                           & $242121642$         & 282444036\\ \hline
$7^6$                           & 11863960458& 13840362816\\ \hline
$7^7$                         & 581334062442& 678216602154\\ \hline
\end{tabular}
\vspace{5pt}
\caption{Computational results on actual value($N_r(\mathbb{Z}_{p^m})$), and upper bound($N^{''}_r(\mathbb{Z}_{p^m})$) on the number of non-singular reduced Weierstrass equation over $\mathbb{Z}_{p^m}$. }
\label{NSredLB}
\end{table}

Here we have subtracted the exact number of pairs $(a,b)$ which gives solution to $\Delta^{0}(p^m)$. Solutions to $\Delta^{v}(p^m)$ where $v$ is other non-units, are not subtracted in this case and thus giving upper bound to the number of non-singular Weierstrass elliptic curves over $\mathbb{Z}_{p^m}$. Subtracting solutions of $\Delta^{v}(p^m)$ for all non-units will give you the actual value to the problem.


    



Finally, the subsequent results (Theorem \ref{irwr} and \ref{thrm10}) gives an exact formula to count the isomorphism classes of reduced Weierstrass elliptic curves over $\mathbb{Z}_{p^m}$ and $\mathbb{Z}_n$, where $n$ is a composite number.
 
 \begin{theorem}
  The number of isomorphism classes of reduced Weierstrass elliptic curves over the finite ring $\mathbb{Z}_{p^m},\ \gcd(p,6)=1$ will be,
  
    \[
C_r(\mathbb{Z}_{p^m}) = \left \{ \,
\begin{array}{lll}
2p^{m}+6 & when & p \equiv 1 \pmod{12}\\
2p^{m}+2 & when & p \equiv 5 \pmod{12} \\
2p^{m}+4 & when & p \equiv 7 \pmod{12} \\
2p^{m} & when & p \equiv 11 \pmod{12}.\\
\end{array}
\right.\\
\]
 \label{irwr}
 \end{theorem}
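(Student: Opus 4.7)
The plan is to mimic the approach of \textit{Theorem \ref{irw}} by partitioning the nonsingular reduced Weierstrass curves over $\mathbb{Z}_{p^m}$ into three cases according to whether $a$ or $b$ vanishes: (I) $a\neq 0$ and $b\neq 0$, (II) $a=0$ and $b\neq 0$, and (III) $a\neq 0$ and $b=0$. For each class representative $E_k$, \textit{Theorem \ref{rwr}} asserts that its isomorphism class contains $\Phi(p^m)/|Aut(E_k)|$ curves, so after determining $|Aut(E_k)|$ and the total count of nonsingular curves per case, the contributing isomorphism classes follow by division. The stated piecewise formula is then obtained by evaluating $\gcd(6,p-1)$ and $\gcd(4,p-1)$ in each residue class of $p$ modulo $12$.

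For Case II, nonsingularity of $\Delta=-16\cdot 27 b^2$ forces $b\in\mathbb{Z}_{p^m}^*$, producing $\Phi(p^m)$ curves. The transformation $(x,y)\mapsto(u^2 x, u^3 y)$ fixes $E$ precisely when $u^6 b=b$, that is, $u^6=1$. Since $\mathbb{Z}_{p^m}^*$ is cyclic of order $\Phi(p^m)=p^{m-1}(p-1)$ and $\gcd(6,p)=1$, this congruence has exactly $\gcd(6,p-1)$ solutions; hence $|Aut(E)|=\gcd(6,p-1)$ uniformly and Case II contributes $\gcd(6,p-1)$ isomorphism classes. A completely parallel argument for Case III yields $|Aut(E)|=\gcd(4,p-1)$ and $\gcd(4,p-1)$ isomorphism classes.

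The main obstacle lies in Case I. The key structural observation is that the sub-case where both $a$ and $b$ are non-units is vacuous: if $v_p(a)\ge 1$ and $v_p(b)\ge 1$, then $v_p(4a^3+27b^2)\ge 2$, so $\Delta$ is not a unit and $E$ is singular. Therefore, in any nonsingular Case I curve at least one of $a,b$ lies in $\mathbb{Z}_{p^m}^*$. When $a$ is a unit, $u^4 a=a$ forces $u^4=1$; combining with $u^6 b=b$ and using $u^4=1$ to simplify $u^6=u^2$, one obtains $u^2\equiv 1\pmod{p^{m-v_p(b)}}$. The fourth roots of unity in $\mathbb{Z}_{p^m}^*$ are $\pm 1$, plus $\pm i$ with $i^2=-1$ when $p\equiv 1\pmod 4$, and $(\pm i)^2=-1\not\equiv 1\pmod{p^{m-v_p(b)}}$ since $p$ is odd and $v_p(b)<m$. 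Hence only $u=\pm 1$ survive and $|Aut(E)|=2$; the symmetric argument covers the case when $b$ is a unit. Invoking \textit{Conjecture \ref{conjphin2}}, the Case I curve count is $\Phi(p^{2m})-2\Phi(p^m)=\Phi(p^m)(p^m-2)$, yielding $2(p^m-2)=2p^m-4$ isomorphism classes.

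Summing, $C_R(\mathbb{Z}_{p^m})=2p^m-4+\gcd(6,p-1)+\gcd(4,p-1)$, and the stated four values follow immediately from the pair $(\gcd(6,p-1),\gcd(4,p-1))$ being $(6,4),(2,4),(6,2),(2,2)$ for $p\equiv 1,5,7,11\pmod{12}$, respectively. The delicate ingredient is the Case I automorphism analysis, where the ring-theoretic valuation argument ruling out both-non-unit pairs, together with the incompatibility of $\pm i$ with the congruence $u^2\equiv 1\pmod{p^{m-v_p(b)}}$, both represent genuine departures from the finite-field setting handled in \textit{Theorem \ref{irw}}.
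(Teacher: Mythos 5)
Your proof is correct and follows essentially the same route as the paper's: the same three-way split on the vanishing of $a$ and $b$, the same counting identity $\sum_{E_k}\Phi(p^m)/|Aut(E_k)|=\Phi(p^{2m})$, and the same reliance on the (still conjectural) total count $N_R(\mathbb{Z}_{p^m})=\Phi(p^{2m})$. The one place you go beyond the paper is Case I, where your valuation argument excluding the possibility that both $a$ and $b$ are non-units, and your check that $|Aut(E)|=2$ persists when exactly one of $a,b$ is a nonzero non-unit, supply rigor at precisely the point where the ring setting genuinely differs from $\mathbb{F}_q$ and where the paper's proof merely asserts the conclusion.
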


  \begin{proof}
     Consider the isomorphic reduced Weierstrass elliptic curves $E_1/\mathbb{Z}_{p^m} : y^2=x^3+ax+b$ and $E_2/\mathbb{Z}_{p^m} : y^2=x^3+\bar{a}x+\bar{b}$ along with the corresponding relations $\bar{a}=u^{-4}a$ and $\bar{b}=u^{-6}b$, $u \in {\mathbb{Z}_{p^m}^*}$ \cite{lenstra1986elliptic}.
     An analogous approach to Theorem \ref{irw} leads to the fact that the number of isomorphism classes of elliptic curves over $\mathbb{Z}_{p^m}$, corresponds to the $|Aut(E_1)|$ and we are left with three cases: {\bf a)} for $a \neq 0\ \text{and}\ b \neq 0\ (j(E_1) \neq 0,1728$) $|Aut(E_1)|=2$, {\bf b)} for $a=0\ \text{and}\ b \neq 0\ (j(E_1)=0$) either $|Aut(E_1)|=6$ or $|Aut(E_1)|=2$, and {\bf c)} for $a \neq 0\ \text{and}\ b=0\ (j(E_1)=1728$) either $|Aut(E_1)|=4$ or $|Aut(E_1)|=2$.  
     

    Subsequently, considering Theorem \ref{nrwr} and Theorem \ref{rwr}, we can immediately write the following equation,
    \begin{equation}
    \sum_{E_k}\frac{\phi(n)}{|Aut(E_k)|} = \phi(n^2), 
    \label{eqrc}
    \end{equation}
    where the summation is taken over the set of isomorphism class representatives of the elliptic curves defined over $\mathbb{Z}_{p^m}$.
    Since, $\gcd(p,6)=1$ we have $p \equiv 1,5,7,11\ \pmod{12}$.
    
    For the sake of simplicity, we consider the case when $p \equiv 1\ \pmod{12}$, and the rest of the cases will follow similarly.  
    
    The elementary number theoretic approach ensures $p \equiv 1\ \pmod{l}$, where $2 \leq l \leq 4$.
    So, we are left with the following sub-cases.

    
    \begin{itemize}
    
    \item[i)] $p \equiv 1 \pmod 2$ and hence $p^m \equiv 1 \pmod 2$, $i.e.$, $p = 2i+1$\ \text{and}\ $p^{m} = 2j+1,\ i, j \in \mathbb{N}$.
    A simple calculation shows that $2 \mid \phi(p^m)$, and so using Cauchy's theorem, we can ensure that there exists an element of order $2$ in $\mathbb{Z}_{p^{m}}^{*}$.   
    
        
     \item[ii)] $p \equiv 1 \pmod 3$ and hence $p^m \equiv 1 \pmod 3$, $i.e.$, $p = 3i+1$\ \text{and}\ $p^{m} = 3j+1,\ i, j \in \mathbb{N}$.
    Since $3 \mid \phi(p^m)$, using Cauchy's theorem, we ensure that there exists an element of order $3$ in $\mathbb{Z}_{p^{m}}^{*}$.

        
        
    \item[iii)] $p \equiv 1 \pmod 4$ and hence $p^m \equiv 1 \pmod 4$, $i.e.$, $p = 4i+1$\ \text{and}\ $p^{m} = 4j+1,\ i, j \in \mathbb{N}$.
    As $4 \mid \phi(p^m)$, using Sylow's first theorem, $\mathbb{Z}_{p^{m}}^{*}$ contains an element of order $4$.

        
        
    \end{itemize} 

  Therefore, from the above three cases, it can be inferred that the possible values of $|Aut(E_1)|$ are $2$, $4$, and $6$. 
  
  By fixing $|Aut(E)|=6$, we get $\Delta = -16(27b^2)$, since $a=0$. 
  Let $k_1$ be the number of isomorphism classes of reduced Weierstrass elliptic curves having the order of the respective automorphism group as $6$. 
  The possibilities of $b$ to be a unit will be, $\phi(p^m)$, and therefore we obtain the following relation

   \begin{equation*}
     \begin{aligned}
      (\frac{\phi(p^m)}{6})k_1 &= \phi(p^m) \\
      \implies k_1 &= 6. \\
     \end{aligned}
   \end{equation*}

 Similarly, if $k_2$ represents the number of isomorphism classes of reduced Weierstrass elliptic curves having the order of the respective automorphism group as $4$, then $k_2=4$. 
 Consider $k_3$ represents the number of isomorphism classes of reduced Weierstrass elliptic curves having the order of the respective automorphism group as $2$.
 Now we have $\phi(p^{2m})$ non-singular curves in total over $\mathbb{Z}_{p^m}$ (using Theorem \ref{nrc}), out of which there are $2\phi(p^m)$ curves that are having either $|Aut(E_k)|=4$ or $|Aut(E_k)|=6$. 
    So the total number of non-singular curves with $|Aut(E_k)|=2$ will be $\phi(p^{2m})- 2\phi(p^m)= \phi(p^m)(p^m - 2)$ and hence the number of isomorphism classes will be $2p^m-4$, using Equation \ref{eqrc}.

 Finally, if $p \equiv 1 \pmod{12}$, then there are $k_1 + k_2 + k_3 = 6 + 4 + 2p^m -4 = 2p^m + 6$ isomorphism classes of elliptic curves over $\mathbb{Z}_{p^m}$.
 Using an analogous approach, one can derive the different isomorphism classes for different relations for $p$, and this concludes the proof.
 \end{proof}
As the immediate consequence, we obtain the following result on the multiplicative form of the number of isomorphic classes given in Theorem \ref{irwr}.  

\begin{theorem}
For the odd composite integer $n=p_1^{e_1} p_2^{e_2} \ldots p_l^{e_l}$, $C_{r}(\mathbb{Z}_n) = \prod\limits_{i=1}^{l} C_r(\mathbb{Z}_{p_i^{e_i}})$. 
 \label{thrm10}
\end{theorem}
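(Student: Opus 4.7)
The plan is to establish a bijection between the set of $\mathbb{Z}_n$-isomorphism classes of nonsingular reduced Weierstrass curves over $\mathbb{Z}_n$ and the Cartesian product of the corresponding sets over each $\mathbb{Z}_{p_i}$, in the spirit of \textit{Theorem \ref{nrc}}. The starting point is the Chinese Remainder Theorem ring isomorphism $\mathbb{Z}_n \cong \mathbb{Z}_{p_1} \times \mathbb{Z}_{p_2} \times \cdots \times \mathbb{Z}_{p_l}$, which is available because the $p_i$ are distinct odd primes. Applied coefficient-wise, this sends a reduced curve $E/\mathbb{Z}_n : y^2 = x^3 + ax + b$ to the $l$-tuple of reduced curves $E_i/\mathbb{Z}_{p_i} : y^2 = x^3 + a_i x + b_i$ with $a_i \equiv a \pmod{p_i}$ and $b_i \equiv b \pmod{p_i}$. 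Because the discriminant is polynomial in the coefficients, $\Delta(E) \in \mathbb{Z}_n^*$ precisely when $\Delta(E_i) \in \mathbb{Z}_{p_i}^*$ for every $i$, so nonsingularity decomposes cleanly across components.

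Next, I would verify that the isomorphism relation also decomposes. Two nonsingular curves $E, E'/\mathbb{Z}_n$ are isomorphic iff there exists $u \in \mathbb{Z}_n^*$ with $\bar{a} = u^{-4}a$ and $\bar{b} = u^{-6}b$. Reducing this relation modulo each $p_i$ gives component-wise isomorphisms via $u_i \in \mathbb{Z}_{p_i}^*$, which establishes one direction. For the converse, given isomorphisms $E_i \cong E'_i$ realized by units $u_i \in \mathbb{Z}_{p_i}^*$ for each $i$, the induced unit-group isomorphism $\mathbb{Z}_n^* \cong \prod_{i=1}^l \mathbb{Z}_{p_i}^*$ lets us lift the tuple $(u_1, \ldots, u_l)$ to a unique $u \in \mathbb{Z}_n^*$ satisfying $\bar{a} \equiv u^{-4} a \pmod{p_i}$ and $\bar{b} \equiv u^{-6} b \pmod{p_i}$ for every $i$, hence modulo $n$ by CRT. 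Thus $E$ and $E'$ lie in the same $\mathbb{Z}_n$-isomorphism class iff their projections $E_i, E'_i$ lie in the same $\mathbb{Z}_{p_i}$-isomorphism class for every index $i$.

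This equivalence of relations yields a well-defined bijection between the $\mathbb{Z}_n$-isomorphism classes and the $l$-fold Cartesian product of the $\mathbb{Z}_{p_i}$-isomorphism classes, sending the class of $E$ to the tuple of classes of the $E_i$. Taking cardinalities gives $C_R(\mathbb{Z}_n) = \prod_{i=1}^{l} C_R(\mathbb{Z}_{p_i})$ at once.

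The main obstacle I anticipate is the careful bookkeeping in the converse direction of the isomorphism decomposition, specifically ensuring that the lifted $u \in \mathbb{Z}_n^*$ really does act as required on the coefficients simultaneously across all components. This hinges on the multiplicative CRT isomorphism $\mathbb{Z}_n^* \cong \prod \mathbb{Z}_{p_i}^*$ being compatible with taking fourth and sixth powers, which is immediate since ring homomorphisms respect powers. Once this compatibility is recorded, no computation beyond CRT and \textit{Theorem \ref{irwr}} is required, and the product formula follows.
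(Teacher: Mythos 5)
Your proof is correct, but it takes a genuinely different (and more direct) route than the paper. The paper works at the level of class leaders: it writes $C_R(\mathbb{Z}_n)$ as a sum over class representatives $\mathbb{E}_k$, invokes the orbit-counting relation $C^{(k)}_R(\mathbb{Z}_n) = N^{(k)}_R(\mathbb{Z}_n)\,|Aut(\mathbb{E}_k)|/\Phi(n)$ from \textit{Theorem \ref{rwr}}, and then appeals to the multiplicativity of each of the three ingredients --- the per-class curve count, the automorphism group order (cited externally), and the Euler totient --- before interchanging the sum and the product. You instead build the bijection one level down: the CRT isomorphism on coefficient pairs $(a,b)$, the observation that $\Delta \in \mathbb{Z}_n^*$ iff each reduction $\Delta_i \in \mathbb{Z}_{p_i}^*$, and the decomposition of the isomorphism relation itself via $\mathbb{Z}_n^* \cong \prod_i \mathbb{Z}_{p_i}^*$ (lifting the tuple $(u_1,\dots,u_l)$ and checking it acts correctly on $a$ and $b$ componentwise, hence globally). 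This yields a bijection between $\mathbb{Z}_n$-isomorphism classes and tuples of $\mathbb{Z}_{p_i}$-isomorphism classes, and the product formula falls out by taking cardinalities. Your version is more self-contained --- it does not need the multiplicativity of $|Aut(\mathbb{E}_k)|$ or the orbit-size formula at all --- and it makes explicit the surjectivity onto tuples of nonsingular curves, which the paper leaves implicit in its decomposition $E(\mathbb{Z}_n) \cong E(\mathbb{Z}_{p_1}) \oplus \cdots \oplus E(\mathbb{Z}_{p_l})$. The paper's route, in exchange, reuses its class-leader bookkeeping and automorphism machinery, which it also needs elsewhere. One small point worth stating explicitly in your write-up: the induced map on isomorphism classes is surjective because the coefficient-wise CRT map is a bijection $\mathbb{Z}_n^2 \to \prod_i \mathbb{Z}_{p_i}^2$ carrying nonsingular curves onto tuples of nonsingular curves; you gesture at this but do not record it.
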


\begin{proof}
The number of isomorphism classes of reduced Weierstrass elliptic curves over the finite ring $\mathbb{Z}_{p_i} (1 \leq i \leq l)$, as per Theorem \ref{irw}, we have,
        \[
        C_r(\mathbb{Z}_{p_i}) = \left \{ \,
        \begin{array}{lll}
        2p_i+6 & when & p_i \equiv 1 \pmod {12}\\
        2p_i+2 & when & p_i \equiv 5 \pmod{12} \\
        2p_i+4 & when & p_i \equiv 7 \pmod{12} \\
        2p_i   & when & p_i \equiv 11 \pmod {12}.\\
        \end{array}
        \right.\\
        \]
        
Considering all the possible isomorphism classes over the class representatives, we obtain,
\begin{equation}
      \begin{split}
          C_{r}(\mathbb{Z}_n) = \sum_{\mathbb{E}_k} C^{(k)}_{r}(\mathbb{Z}_{n}), 
      \end{split}
      \label{craute1}
  \end{equation}

where $C^{(k)}_r(\mathbb{Z}_{n}) = 1$ represent the isomorphic class with the class leaders $\mathbb{E}_k$ in reduced form over $\mathbb{Z}_{n}$.
  
 Now Equation \ref{eqrc} leads to,  
  \begin{equation}
      C^{(k)}_{r}(\mathbb{Z}_n) = \frac{N^{(k)}_{r}(\mathbb{Z}_n).|Aut(\mathbb{E}_k)|}{\phi(n)}.
      \label{craute2}
  \end{equation}

For the odd composite integer $n=p_1p_2\ldots p_l$, we have the decomposition form $E(\mathbb{Z}_n) \cong E(\mathbb{Z}_{p_1}) \oplus E(\mathbb{Z}_{p_2}) \oplus \ldots \oplus E(\mathbb{Z}_{p_l})$ \cite{10.5555/1388394}.
This leads to the following results,
\begin{itemize}
    \item[a)]  $|Aut(\mathbb{E}_k)|\; \text{over}\; \mathbb{Z}_n  = \prod\limits_{i=1}^{l} |Aut(\mathbb{E}_{k_i})|\; \text{over}\; \mathbb{Z}_{p_i^{e_i}}$ (\cite{kayal2006complexity})\\
    \item[b)] $N^{(k)}_{R}(\mathbb{Z}_n) = \prod\limits_{i=1}^{l} N^{(k_i)}_{R}(\mathbb{Z}_{p_i^{e_i}})$    
\end{itemize}
Applying above results to Equation \ref{craute2} we will get,
  \begin{equation*}
      \begin{split}
      C^{(k)}_{r}(\mathbb{Z}_n) &= \frac{\prod\limits_{i=1}^{l} N^{(k_i)}_{r}(\mathbb{Z}_{p_i^{e_i}}). \prod\limits_{i=1}^{l} |Aut(\mathbb{E}_{k_i})|}{\prod\limits_{i=1}^{l}\phi(p_i^{e_i})} \\
      &= \prod\limits_{i=1}^{l} \frac{ N^{(k_i)}_{r}(\mathbb{Z}_{p_i^{e_i}}). |Aut(\mathbb{E}_{k_i})|}{\phi(p_i^{e_i})}\\
      &= \prod\limits_{i=1}^{l} C^{(k_i)}_{r}(\mathbb{Z}_{p_i^{e_i}}).
  \end{split}  
  \label{craute3}
  \end{equation*}
Therefore, from Equation \ref{craute1}, we have,
 \begin{equation}
      \begin{split}
          C_{r}(\mathbb{Z}_n) &= \sum_{\mathbb{E}_k}C^{(k)}_{r}(\mathbb{Z}_{n}) \\
          &= \sum_{\mathbb{E}_k} \prod\limits_{i=1}^{l} C^{(k_i)}_{r}(\mathbb{Z}_{p_i^{e_i}}).
      \end{split}
      \label{craute4}
  \end{equation}
  
Finally, considering the decomposition of each $\mathbb{E}_k$ over $\mathbb{Z}_{p_i}$ and the nature of the isomorphism, we obtain the desired result.
 \begin{equation}
      \begin{split}
           C_{r}(\mathbb{Z}_n) &= \prod\limits_{i=1}^{l} \sum_{\mathbb{E}_{k_i}} C^{(k_i)}_{r}(\mathbb{Z}_{p_i^{e_i}})\\
                               &= \prod\limits_{i=1}^{l} C^{}_{r}(\mathbb{Z}_{p_i^{e_i}}).
      \end{split}
      \label{craute5}
  \end{equation}
\end{proof}

\subsection{Generalized Weierstrass Elliptic Curves over $\mathbb{Z}_n$}
The computational data demonstrates that the number of the generalized Weierstrass elliptic curves over $\mathbb{Z}_n$ should be $\phi(n^5)$ \cite{ECCguptalab}.



\begin{theorem}
    The number of non-singular generalized Weierstrass elliptic curves over $\mathbb{Z}_{p^m}$, $N_g(\mathbb{Z}_{p^m})$ is $\phi(p^{5m})$.
\label{gwr} 
\end{theorem}

\begin{proof}
Using similar argument given in Theorem \ref{nrwr}, for each solution to $\Delta \equiv 0$ modulo $p$, there are $p^{(m-1)}$ choices for $a_i = a_i^{(0)} + p A_i, \;
\text{where} \; A_i \in \{0,1,\dots,p^{m-1}-1\}, a_i^{(0)} \text{is solution to $\Delta \equiv 0$ modulo $p$ } \;, 
\; i \in \{1,2,3,4,6\}$ giving $p^{5(m-1)}$ lifts. Since there are $p^{4}$ solutions modulo p, the total number of pairs $(a_i), i \in \{ 1,2,3,4,6\} \in \mathbb{Z}^5_{p^m}$ for which the discriminant is a non-unit is $p^4 \cdot p^{5(m-1)} = p^{5m-1}$.
Subtracting this from the total number of pairs yields
\begin{equation*}
    N_g(\mathbb{Z}_{p^m})
= 
p^{5m} - p^{5m-1}
=
p^{5m-1}(p-1)
=
\phi(p^{5m}).
\end{equation*}
This completes the proof.
\end{proof}

    \begin{theorem}
    For the composite number $n=p_{1}^{e_1} p_{2}^{e_2} \ldots p_{l}^{e_l}$, $N_g(\mathbb{Z}_n) = \prod\limits_{i=1}^{l}N_g(\mathbb{Z}_{p_i^{e_i}}) = \phi(n^5)$. 
    \label{ngc2}
\end{theorem}
\begin{proof}
    A similar argument to Theorem \ref{nrc} completes the proof.
\end{proof}

Corresponding computational validation is given in Table \ref{NSgen}.
 \begin{theorem}
    The number of unique generalized Weierstrass elliptic curves isomorphic to the given curve $E_k/\mathbb{Z}_n$ is $N_g^{(k)}(\mathbb{Z}_n) = \ \ \frac {\phi(n^4)}{|Aut(E)|}$.  
    \label{ugwr}
 \end{theorem}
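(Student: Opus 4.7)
The plan is to mimic the counting argument used in \textit{Theorem \ref{rwt2}} and \textit{Theorem \ref{nec}}, but now over the ring $\mathbb{Z}_n$ and with the full four-parameter admissible change of variables $\tau:(x,y)\mapsto (u^2x+r,\ u^3y+u^2sx+t)$ with $u\in\mathbb{Z}_n^*$ and $r,s,t\in\mathbb{Z}_n$. The first step is simply to count the parameter tuples $(u,r,s,t)$: there are $\Phi(n)$ choices for $u$ and $n$ choices for each of $r,s,t$, giving $\Phi(n)\cdot n^{3}$ admissible transformations. Since $\Phi$ is multiplicative in the sense that $\Phi(n^{4})=n^{3}\Phi(n)$, this is exactly $\Phi(n^{4})$, which is the numerator we are after.

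Next I would argue that each admissible transformation, when applied to $E$, produces a generalized Weierstrass curve isomorphic to $E$, and conversely every curve $E'$ isomorphic to $E$ arises this way. The count of transformations and the count of curves thus differ only by the size of the fibers of the map $\tau\mapsto \tau(E)$. So the heart of the proof is to show that every such fiber has size exactly $|Aut(E)|$.

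For this, I would repeat the coset argument from \textit{Theorem \ref{rwt2}} in the new setting: if $\tau'(E)=E'$ for some particular $\tau'$, then for each $\tau_i\in Aut(E)=\{\tau_1,\dots,\tau_t\}$ the composition $\tau'\circ\tau_i$ also sends $E$ to $E'$, and these $t$ compositions are all distinct because compositions of admissible $\tau$'s form a group (closed under inversion) over $\mathbb{Z}_n$, so cancellation $\tau'\circ\tau_i=\tau'\circ\tau_j\Rightarrow\tau_i=\tau_j$ still works. Conversely, if $\tau''(E)=E'$, then $(\tau')^{-1}\circ\tau''\in Aut(E)$, showing there are no other preimages. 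Hence every fiber has cardinality $|Aut(E)|$.

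The main obstacle, compared to the field case, is justifying that the admissible transformations $\tau$ still form a group over $\mathbb{Z}_n$ (in particular that $\tau$ is invertible whenever $u\in\mathbb{Z}_n^*$) and that distinct tuples $(u,r,s,t)$ give distinct transformations; both rely on $u$ being a unit so that the composition formulas and inversion formulas go through in $\mathbb{Z}_n$ exactly as in $\mathbb{F}_q$. Once this is in place, dividing the total transformation count by the common fiber size yields
\[
\#\{E' : E'\cong E\}=\frac{\Phi(n)\cdot n^{3}}{|Aut(E)|}=\frac{\Phi(n^{4})}{|Aut(E)|},
\]
which is the claim.
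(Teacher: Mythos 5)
Your proposal matches the paper's proof: the paper likewise counts the admissible transformations as $n^3\Phi(n)=\Phi(n^4)$ and then invokes the coset argument of Theorem~\ref{rwt2} to divide by $|Aut(E)|$. Your write-up is in fact somewhat more careful than the paper's, since you explicitly flag that the admissible transformations over $\mathbb{Z}_n$ form a group (invertibility hinging on $u\in\mathbb{Z}_n^*$) and that distinct tuples $(u,r,s,t)$ yield distinct transformations — details the paper leaves implicit.
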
     

 \begin{proof}
 Considering the transformation function $\tau : (x,y) \rightarrow (u^2x + r, u^3y + u^2sx + t),\ u \in  \mathbb{Z}_n^*$, and $r,s,t \in \mathbb{Z}_n$, total number of elliptic curves isomorphic to given curve $E/\mathbb{Z}_n$ are $n^3\phi(n) = \phi(n^4)$ (since the possibilities of $u,r,s$ and $t$ are $\phi(n), n, n, n$ respectively in $\mathbb{Z}_n$).
 We obtain the desired result using a similar argument given in Theorem \ref{rwt2}.
 \end{proof}   
 \begin{openproblem}
     Compact formula as seen in Theorem \ref{igw} for the number of isomorphism classes of Weierstrass elliptic curves over $\mathbb{Z}_n$, i.e, $C_g(\mathbb{Z}_n)$ is unknown. 
    \label{igwr}
 \end{openproblem}
   
 It can be split into the following cases:
   \begin{enumerate}
       \item $n = p$ where $p$ is prime
       \item $n = p^m$ where gcd($p^m$,6) = 1
       \item $n = 2^m$
       \item $n = 3^m$
       \item $n$ is composite integer ($n=p_1^{e_1} p_2^{e_2} \ldots p_l^{e_l}$) where gcd($n$,6) = 1, $l >= 2$
       \item $n$ is composite integer ($n=p_1^{e_1} p_2^{e_2} \ldots p_l^{e_l}$) where gcd($n$,6) $\neq$ 1, $l >= 2$
   \end{enumerate}

For $\gcd(n,6)=1$, every non-singular generalized Weierstrass elliptic curve over $\mathbb{Z}_n$ is $\mathbb{Z}_n$-isomorphic to a reduced Weierstrass elliptic curve (see Section~\ref{pre}); hence the reduction induces a bijection on isomorphism classes, and therefore $C_g(\mathbb{Z}_n)=C_r(\mathbb{Z}_n).$ Considering the same, for the cases $1$, $2$ and $5$, $C_g (\mathbb{Z}_n)$ is given in Theorem \ref{irw}, Theorem \ref{irwr} and Theorem \ref{thrm10} respectively. 
 All the remaining cases are open problems.

The successive section presents a rigorous analysis of the elliptic curves over $\mathbb{Z}_n$ based on the data achieved through computational algorithms. 
The computational results are given extensively to classify the curves over integer modulo rings. 
\section{Summary of classification results over $\mathbb{F}_q$ and $\mathbb{Z}_n$}
To emphasize the structural parallelism between the classification of Weierstrass
elliptic curves over finite fields and finite rings, we summarize the results obtained in Sections \ref{cla} and \ref{clafq} in Table~\ref{tab:summary}.
The comparison highlights a symmetry between the two settings, where
the role of cardinality over finite fields is naturally replaced by Euler’s
totient function in the ring setting.

\begin{table}[ht]
\begin{minipage}{1\textwidth}
\centering
\resizebox{\textwidth}{!}{%
\renewcommand{\arraystretch}{1.25}
\begin{tabular}{|c|c|c|c|c|}
\hline
\textbf{Base} & \textbf{Model} & \textbf{Non-singular Curves} &
\textbf{Isomorphic to $E_k$} & \textbf{Isomorphism Classes} \\
\hline
$\mathbb{F}_q$ &
Reduced &
$N_r(\mathbb{F}_q) = q^2 - q$ &
$N^{(k)}_r(\mathbb{F}_q) = \dfrac{q-1}{|\mathrm{Aut}(E)|}$ &
$
C_r(\mathbb{F}_q)=
\begin{cases}
2q+6, & q \equiv 1 \pmod{12} \\
2q+2, & q \equiv 5 \pmod{12} \\
2q+4, & q \equiv 7 \pmod{12} \\
2q,   & q \equiv 11 \pmod{12}
\end{cases}
$ \\
\hline
$\mathbb{Z}_{n}$ &
Reduced &
$N_r(\mathbb{Z}_{n}) = \phi(n^{2})$ &
$N^{(k)}_r(\mathbb{Z}_n) = \dfrac{\phi(n)}{|\mathrm{Aut}(E)|}$ &
$
C_r(\mathbb{Z}_n) = \prod\limits_{i=1}^{l} C_r(\mathbb{Z}_{p_i^{e_i}}),\,\,
C_r(\mathbb{Z}_{p^m})=
\begin{cases}
2p^m+6, & p \equiv 1 \pmod{12} \\
2p^m+2, & p \equiv 5 \pmod{12} \\
2p^m+4, & p \equiv 7 \pmod{12} \\
2p^m,   & p \equiv 11 \pmod{12}
\end{cases}
$ \\
\hline
$\mathbb{F}_q$ &
Generalized &
$N_g(\mathbb{F}_q) = q^5 - q^4$ &
$N^{(k)}_g(\mathbb{F}_q) = \dfrac{q^4 - q^3}{|\mathrm{Aut}(E)|}$ &
$C_g(\mathbb{F}_q) = 2q + 3 + \left(\frac{-4}{q}\right) + 2\left(\frac{-3}{q}\right)$ \\
\hline
$\mathbb{Z}_n$ &
Generalized &
$N_g(\mathbb{Z}_n) = \phi(n^5)$ &
$N^{(k)}_g(\mathbb{Z}_n) = \dfrac{\phi(n^4)}{|\mathrm{Aut}(E)|}$ &
Open problem for $\mathbb{Z}_{2^m}$ and $\mathbb{Z}_{3^m}$ \\
\hline
\end{tabular}%
}
\caption{Comparison of classification results for reduced and generalized
Weierstrass elliptic curves over $\mathbb{F}_q$ and $\mathbb{Z}_n$.}
\label{tab:summary}
\end{minipage}
\end{table}
\section{Computational results}
\label{com}
This section presents computational results of a simple brute force algorithm implemented in C++ for classifying the elliptic curves over the ring $\mathbb{Z}_n$.
The crucial factor that contributes here is the isomorphism between two elliptic curves by taking the appropriate transformation map. 
Using this idea, we can fix the class leader or the primary non-singular elliptic curve that is isomorphic to the other candidates in the class through the respective transformation map.
Continuing the process over all the non-singular curves, we obtain the complete classification of elliptic curves over the finite ring $\mathbb{Z}_n$. We implemented the classification algorithm and carried out extensive computational experiments. Due to practical computational constraints, the exhaustive classification was performed for $\mathbb{Z}_n$ up to $n = 30$ in the generalized Weierstrass case and up to $n = 199$ in the reduced Weierstrass case. The corresponding codes and datasets are available at \cite{ECCguptalab}. This follows the subsequent results.
 \subsection{Reduced Weierstrass Elliptic Curves over $\mathbb{Z}_n$}
For the reduced Weierstrass elliptic curves over the finite rings $\mathbb{Z}_n$ where $6 \nmid n$, the computational data is comprehended in Table \ref{NSred} and Table \ref{ICred}.
 
Specifically, Table \ref{NSred} presents the values of $N_r(\mathbb{Z}_n)$ and results in Theorem \ref{rwt1}, Theorem \ref{nrwr} and Theorem \ref{nrc}.
\begin{table}[ht]
\centering

\begin{minipage}{0.48\textwidth}
\centering
\begin{tabular}{|c|c|}
\hline
$\mathbb{Z}_n$ & $N_r(\mathbb{Z}_n)$  \\ \hline
5  & 20  \\ \hline
7  & 42  \\ \hline
11 & 110 \\ \hline
13 & 156 \\ \hline
17 & 272 \\ \hline
19 & 342 \\ \hline
23 & 506 \\ \hline
25 & 500 \\ \hline
29 & 812 \\ \hline
31 & 930 \\ \hline
35 & 840 \\ \hline
37 & 1332 \\ \hline
41 & 1640 \\ \hline
43 & 1806 \\ \hline
47 & 2162 \\ \hline
49 & 2058 \\ \hline
53 & 2756 \\ \hline
55 & 2200 \\ \hline
59 & 3422 \\ \hline
\end{tabular}
\caption{Number of non-singular reduced Weierstrass elliptic curves $N_r(\mathbb{Z}_n)$ over $\mathbb{Z}_n$, $6 \nmid n$.}
\label{NSred}
\end{minipage}
\hfill
\begin{minipage}{0.48\textwidth}
\centering
\begin{tabular}{|c|c|}
\hline
$\mathbb{Z}_n$ & $C_r(\mathbb{Z}_n)$ \\ \hline
5  & 12  \\ \hline
7  & 18  \\ \hline
11 & 22  \\ \hline
13 & 32  \\ \hline
17 & 36  \\ \hline
19 & 42  \\ \hline
23 & 46  \\ \hline
25 & 52  \\ \hline
29 & 60  \\ \hline
31 & 66  \\ \hline
35 & 216 \\ \hline
37 & 80  \\ \hline
41 & 84  \\ \hline
43 & 90  \\ \hline
47 & 94  \\ \hline
49 & 102 \\ \hline
53 & 108 \\ \hline
55 & 264 \\ \hline
59 & 118 \\ \hline
\end{tabular}
\caption{Number of isomorphism classes $C_r(\mathbb{Z}_n)$ for reduced Weierstrass elliptic curves over $\mathbb{Z}_n$, $6 \nmid n$.}
\label{ICred}
\end{minipage}

\end{table}



For the computational data on the number of classes of reduced Weierstrass curves $C_r(\mathbb{Z}_n)$, one can look into Table \ref{ICred}, 
Also, the computational data validates Theorem \ref{irwr} and Theorem \ref{irw}.

\subsection{Generalized Weierstrass Elliptic Curves over $\mathbb{Z}_n$}
 
 We summarize the computational data related to the generalized Weierstrass elliptic curve in Table \ref{NSgen} and Table \ref{ICgen}.
 
 In particular, Table \ref{NSgen} presents the values of $N_g(\mathbb{Z}_n)$, coinciding with Theorem \ref{gwt1} and Theorem \ref{gwr}, Theorem \ref{ngc2} in the compact form over the finite fields and finite rings, respectively. 
 \begin{table}[ht]
\centering

\begin{minipage}{0.48\textwidth}
\centering
\begin{tabular}{|c|c|}
\hline
$\mathbb{Z}_n$ & $N_g(\mathbb{Z}_n)$ \\ \hline
2  & 16       \\ \hline
3  & 162      \\ \hline
4  & 512      \\ \hline
5  & 2500     \\ \hline
6  & 2592     \\ \hline
7  & 14406    \\ \hline
8  & 16384    \\ \hline
9  & 39366    \\ \hline
10 & 40000    \\ \hline
11 & 146410   \\ \hline
12 & 82944    \\ \hline
13 & 342732   \\ \hline
14 & 230496   \\ \hline
15 & 405000   \\ \hline
16 & 524288   \\ \hline
17 & 1336336  \\ \hline
18 & 629856   \\ \hline
19 & 2345778  \\ \hline
20 & 1280000  \\ \hline
\end{tabular}
\caption{Number of non-singular generalized Weierstrass elliptic curves $N_g(\mathbb{Z}_n)$ over $\mathbb{Z}_n$.}
\label{NSgen}
\end{minipage}
\hfill
\begin{minipage}{0.48\textwidth}
\centering
\begin{tabular}{|c|c|}
\hline
$\mathbb{Z}_n$ & $C_g(\mathbb{Z}_n)$ \\ \hline
2  & 5  \\ \hline
3  & 8  \\ \hline
4  & 8  \\ \hline
5  & 12 \\ \hline
6  & 40 \\ \hline
7  & 18 \\ \hline
8  & 16 \\ \hline
9  & 18 \\ \hline
10 & 60 \\ \hline
11 & 22 \\ \hline
12 & 64 \\ \hline
13 & 32 \\ \hline
14 & 90 \\ \hline
15 & 96 \\ \hline
16 & 32 \\ \hline
17 & 36 \\ \hline
18 & 90 \\ \hline
19 & 42 \\ \hline
20 & 96 \\ \hline
\end{tabular}
\caption{Number of isomorphism classes $C_g(\mathbb{Z}_n)$ for generalized Weierstrass elliptic curves over $\mathbb{Z}_n$.}
\label{ICgen}
\end{minipage}

\end{table}

In Table \ref{ICgen}, we present the computational results on the number of classes of generalized Weierstrass curves $C_g(\mathbb{Z}_n)$, that corresponds to  Theorem \ref{igw} and Theorem \ref{igwr}. 

To provide an extensive outlook on computational data, we classify the non-singular generalized Weierstrass elliptic curves with respective class leaders and appropriate transformation map over the ring $\mathbb{Z}_5$ in Table \ref{ICz4}.  
There are $12$ isomorphism classes with respective class leaders, each with No. of curves isomorphic to the corresponding class leader with a particular coordinate-based transformation function. 
All classes have different $|Aut(E)|$, which shows that when applying the transformation on each curve, precisely $|Aut(E)|$ transformations result in the same curve.

\begin{table}[]
\centering
\begin{tabular}{|c|c|c|c|c|}
\hline
Class & No. of & Class leader & $|Aut(E)|$ & No. of \\ 
& curves & & & points \\
\hline
1  & 2 & $y^2 = x^3 + 1$      & 2 & 5 \\ \hline
2  & 1 & $y^2 = x^3 + x$      & 4 & 3 \\ \hline
3  & 2 & $y^2 = x^3 + x + 1$  & 2 & 8 \\ \hline
4  & 2 & $y^2 = x^3 + x + 2$  & 2 & 3 \\ \hline
5  & 2 & $y^2 = x^3 + 2$      & 2 & 5 \\ \hline
6  & 1 & $y^2 = x^3 + 2x$     & 4 & 1 \\ \hline
7  & 2 & $y^2 = x^3 + 2x + 1$ & 2 & 6 \\ \hline
8  & 1 & $y^2 = x^3 + 3x$     & 4 & 9 \\ \hline
9  & 2 & $y^2 = x^3 + 3x + 2$ & 2 & 4 \\ \hline
10 & 1 & $y^2 = x^3 + 4x$     & 4 & 7 \\ \hline
11 & 2 & $y^2 = x^3 + 4x + 1$ & 2 & 7 \\ \hline
12 & 2 & $y^2 = x^3 + 4x + 2$ & 2 & 2 \\ \hline
\end{tabular}
\caption{Complete classification of the isomorphism classes of the generalized Weierstrass elliptic curves over $\mathbb{Z}_5$ (Note that, the number of points excludes point at infinity $\mathcal{O}$).}
\label{ICz4}
\end{table}
For the complete classification data in detail over finite rings $\mathbb{Z}_n$, readers are referred to \cite{ECCguptalab}. 
\section{Conclusion}
\label{con}
The present work gives the fundamental classification of Weierstrass elliptic curves over the finite ring $\mathbb{Z}_n$.
Both the generalized and reduced curves are considered in the study.
The precise expressions for enumerating non-singular elliptic curves, determining the count of curves isomorphic to a specific curve, and calculating the number of isomorphism classes within a finite ring are also provided. The extensive computational data\cite{ECCguptalab} is also included in this work to validate the results. Open problem is given. It would be interesting to establish Hasse-Weil-like bound on the number of points of an elliptic curve of $\mathbb{Z}_n$.  
\bibliographystyle{amsplain}
\bibliography{elliptic}

\end{document}